\documentclass{lmcs} 
\pdfoutput=1

\usepackage[utf8]{inputenc}

\usepackage{lastpage}
\lmcsdoi{19}{1}{4}
\lmcsheading{}{\pageref{LastPage}}{}{}%
{Sep.~03,~2021}{Jan.~16,~2023}{}

\keywords{Coalgebra; Bisimulation; Weighted Automata; Semirings; Semimodules.}

\usepackage{amssymb}
\usepackage{amsmath}
\usepackage{stmaryrd}
\usepackage{adjustbox}
\usepackage{subcaption}
\usepackage{multicol}

\usepackage{hyperref}
\theoremstyle{plain} 

\usepackage{array}


\newcommand{\be}{\begin{enumerate}}
\newcommand{\ee}{\end{enumerate}}
\newcommand{\bi}{\begin{itemize}}
\newcommand{\ei}{\end{itemize}}


\newcommand\etal{\textit{et al.}}
\newcommand\ie{\textit{i.e.}}

\newcommand{\ra}{\rightarrow}
\newcommand{\la}{\leftarrow}
\newcommand{\Ra}{\Rightarrow}

\newcommand{\lra}{\longrightarrow}


\newcommand{\cA}{\mathcal{A}}

\newcommand{\cC}{\mathcal{C}}

\newcommand{\cF}{\mathcal{F}}

\newcommand{\cL}{\mathcal{L}}
\newcommand{\cM}{\mathcal{M}}

\newcommand{\cP}{\mathcal{P}}

\newcommand{\cW}{\mathcal{W}}

\newcommand{\N}{\mathbb{N}}
\newcommand{\R}{\mathbb{R}}

\newcommand{\Z}{\mathbb{Z}}

\newcommand{\trans}[3]{#1 \stackrel{#2}{\lra} #3}




\newcommand{\myquad}[1][1]{\hspace*{#1em}\ignorespaces}



\usetikzlibrary{arrows,arrows.meta,automata,positioning,quotes}
\usetikzlibrary{cd} 
\usetikzlibrary{shapes,fit,calc}
\tikzstyle{smallblock} = [draw, fill=white, rectangle, 
    minimum height=2cm, minimum width=3cm]
\tikzstyle{block} = [draw, fill=white, rectangle, 
    minimum height=2cm, minimum width=4cm]
\tikzstyle{bigblock} = [draw, fill=white, rectangle, 
    minimum height=4cm, minimum width=8cm]
\tikzstyle{input} = [coordinate]
\tikzstyle{output} = [coordinate]
\tikzstyle{pinstyle} = [pin edge={to-,thin,black}]

\tikzset{
->, 
>=stealth', 
node distance=1.0cm, 
every state/.style={thick}, 
initial text=$ $, 
}

\begin{document}

\title[Coalgebras for Bisimulation]{Coalgebras for Bisimulation of\texorpdfstring{\\}{ }Weighted Automata over Semirings}
\author[P.~Bhaduri]{Purandar Bhaduri\lmcsorcid{0000-0002-8847-0394}}
\address{Indian Institute of Technology Guwahati\\ Guwahati 781039, India}
\email{pbhaduri@iitg.ac.in}
\sloppy

\begin{abstract}
Weighted automata are a generalization of nondeterministic automata that
associate a weight drawn from a semiring $K$ with every transition and every
state. Their behaviours can be formalized either as weighted language
equivalence or weighted bisimulation. In this paper we explore the properties of
weighted automata in the framework of coalgebras over (i) the category
$\mathsf{SMod}$ of semimodules over a semiring $K$ and $K$-linear maps, and (ii)
the category $\mathsf{Set}$ of sets and maps. We show that the behavioural
equivalences defined by the corresponding final coalgebras
in these two cases characterize weighted language equivalence and weighted
bisimulation, respectively. These results extend earlier work by Bonchi \etal\
using the category $\mathsf{Vect}$ of vector spaces and linear maps as the
underlying model for weighted automata with weights drawn from a field $K$. The
key step in our work is generalizing the notions of linear relation and linear
bisimulation of Boreale from vector spaces to semimodules using the concept of
the kernel of a $K$-linear map in the sense of universal algebra. We also
provide an abstract procedure for forward partition refinement for computing
weighted language equivalence. Since for weighted automata defined over
semirings the problem is undecidable in general, it is guaranteed to halt only
in special cases. We provide sufficient conditions for the termination
of our procedure. Although the results are similar to those of Bonchi \etal,
many of our proofs are new, especially those about the coalgebra in $\mathsf{SMod}$
characterizing weighted language equivalence.
\end{abstract}

\maketitle

\section{Introduction}
\label{sec:intro}

Bisimulation was introduced by Park and
Milner~\cite{park1981concurrency,milner1989communication} for characterizing the
equivalence of two processes specified by transition systems or process algebra
terms. Over the years the notion has had an enduring impact on the study of the
behaviour of systems, with ramifications in concurrency, automata theory, modal
logic, coalgebras, games and formal verification. The basic notion of
bisimulation for discrete systems has been extended to probabilistic,
quantitative and even continuous systems.

Coalgebras are a category theoretic concept that enable a study of state
transition systems and their behaviours in a unified
setting~\cite{rutten2000universal,jacobs2016introduction}. The idea behind a
coalgebraic theory of systems is the following. Given an endofunctor $\cF :\cC
\ra \cC$ on a concrete category $\cC$, a coalgebra $f: X \ra \cF X$ represents a
transition system on the set of states $X$ possibly with additional structure.
For example, in $\mathsf{Set}$, the category of sets and maps, a coalgebra for
the endofunctor $\cF_d X = 2 \times X^A$ represents a deterministic automaton with
states $X$ and alphabet $A$. This is because a coalgebra for $\cF_d$ can be seen
as a map $\langle o, \delta \rangle : X \ra 2 \times X^A$ where $o: X \ra 2$ is
the output map indicating whether a given state is accepting and $\delta: X \ra
X^A$ is the transition map where $\delta(x)(a)$ is the next state on reading the
letter $a$ in state $x$. Here $2=\{0,1\}$ is the set of truth values. Similarly
a coalgebra for the endofunctor $\cF_n(X) = 2 \times (\cP_\omega X)^A$ on
$\mathsf{Set}$, where $\cP_\omega X$ is the finite powerset of $X$, represents a
nondeterministic automaton.

Homomorphisms between coalgebras can be seen as behaviour-preserving maps. A
final coalgebra for the functor $\cF: \cC \ra \cC$, if it exists, is the
universe of all possible $\cF$-behaviours. The unique arrow from any coalgebra
to the final coalgebra maps a state to its behaviour. The final coalgebra for
the functor $\cF$ naturally induces a notion of $\cF$-behavioural equivalence,
denoted by $\approx_\cF$. Two states are behaviourally equivalent if they are
mapped to the same element in the final coalgebra by the unique arrow. For
example, for deterministic automata, the behaviour of a state is the language
accepted by the automaton starting from that state; two states $x$ and $y$
satisfy $x \approx_{\cF_d}$ if and only if they are language equivalent. For
nondeterministic automata, two states are $\cF_n$-behaviourally equivalent for
the functor $\cF_n$ on $\mathsf{Set}$ defined above if and only if they are
bisimilar. However, if we consider the category $\mathsf{Rel}$ of sets and
relations, the behavioural equivalence for a suitable functor coincides with
language equivalence~\cite{hasuo2007generic} of nondeterministic automata.
Thus the notion of behavioural equivalence for the same computational object is
relative to the underlying category and the associated endofunctor.

In this paper we focus on bisimulation for weighted
automata~\cite{buchholz2008bisimulation}. Weighted automata were introduced by
Sh\"{u}tzenburger~\cite{schutzenberger1961definition} and have found renewed
interest in the last two decades~\cite{droste2009handbook} as they arise in
various contexts where quantitative modelling is involved. Intuitively, weighted
automata generalize the notion of nondeterministic automata where each state and
each transition has an associated weight valued in some semiring. Such a weight could
represent a cost, reward or probability, or any other measure of interest. Like
nondeterministic automata, the behaviour of weighted automata have two different
characterizations, in terms of weighted language equivalence and weighted
bisimulation.

In \cite{bonchi2012coalgebraic} Bonchi \etal\ have given a comprehensive account
of both weighted-language equivalence and bisimilarity of weighted automata in
terms of coalgebras of an endofunctor $\cL$ on $\mathsf{Vect}$ (the category of
vector spaces and linear maps) and an endofunctor $\cW$ on $\mathsf{Set}$,
respectively. The characterization of weighted language equivalence, is based on
an elegant notion of linear bisimulation given by
Boreale~\cite{boreale2009weighted}. The idea here is that a weighted automaton
is a vector space (of states) over a field $K$, with two linear maps: the output
map from states to observations valued in $K$ and an $A$-indexed family of
transition maps from states to states, where $A$ is the set of actions. Then, a
\emph{linear relation} over states is a set of pairs whose differences form a
subspace of the vector space and a linear bisimulation is a linear relation that
preserves the output map and (the corresponding subspace) is invariant under the
transition maps. Boreale showed that this notion of linear bisimulation
coincides with weighted language equivalence. Bonchi \etal\
~\cite{bonchi2012coalgebraic} gave a coalgebraic formulation of these results
using an endofunctor $\cL$ on $\mathsf{Vect}$.

Weighted automata were originally defined over semirings and not fields.
Therefore, it is desirable that the results of
Boreale~\cite{boreale2009weighted} and Bonchi \etal
~\cite{bonchi2012coalgebraic} related to linear bisimulation and language
equivalence be generalized to semimodules over semirings. This is exactly what
we accomplish in this paper, starting with the assumption that $K$ is a
semiring. Bonchi \etal\ had identified in their work the results which could be
extended to semirings in a straightforward way, and the ones which could not,
because the latter involve the minus operator in fields. These are the
the results of Boreale on linear bisimulation and their coalgebraic formulation
which make essential use of the properties of fields and vector spaces.
We show that there is an elegant characterization of these concepts in the
semiring-semimodule setting, by leveraging the concept of \emph{kernel} of a
$K$-linear map in the universal algebraic sense, \ie, $\mathrm{ker}(f)=\{(u,v)
\mid f(u)=f(v)\}$ and its universal properties. We show that in the special case
that $K$ is a field, our results are identical to those of \cite{boreale2009weighted}
and \cite{bonchi2012coalgebraic}.

Our coalgebraic characterization of linear bisimulation proceeds as follows. We
define a $K$-linear relation on a semimodule $V$ over the semiring $K$ as the
kernel of a $K$-linear map $f:V \ra W$ with domain $V$. Clearly, this is an
equivalence relation which is a congruence. In fact any relation $R$ on $V$ can
be turned into a $K$-linear relation by considering the smallest congruence
$R^\ell$ containing $R$ and taking $f$ to the be canonical map $V \ra V/R^\ell$
that sends each element to its congruence class. A $K$-linear bisimulation on a
weighted automaton is then defined as a $K$-linear relation which preserves the
output map of the weighted automaton and is invariant under the transition map
as in \cite{bonchi2012coalgebraic}. We show that this approach leads to all the
results of \cite{bonchi2012coalgebraic} regarding the functor $\cL$ and weighted
language equivalence although the proofs are new as we cannot assume the field
properties of $K$, in particular the existence of the additive inverse.
Table~\ref{tab:comp} is a summary of the correspondence between the key concepts
in \cite{boreale2009weighted,bonchi2012coalgebraic} and the present work.

\begin{table}[t]
\centering
\def\arraystretch{1.2}
\begin{tabular}{ | m{4cm} | m{5cm}| m{4.5cm} | } 
\hline
{\bf Key Component} & {\bf Boreale \cite{boreale2009weighted} and Bonchi \etal\ \cite{bonchi2012coalgebraic} } &
\multicolumn{1}{|c|}{\bf Our work} \\ 
\hline
Domain of weights & Field $K$ & Semiring $K$ \\
\hline
State space & Vector Space $V$ over $K$ & Semimodule $V$ over $K$ \\
\hline
Foundation for $K$-linear relations & Subspace $U$ of $V$ & Linear map $f$ from $V$\\
\hline
$K$-Linear Relation $R$ & $uRv$ iff $u - v \in U$ & $uRv$ iff 
$R= \mathrm{ker}(f)$ \\
\hline
Linear extension $R^\ell$ of relation $R$ & $uR^\ell v$ iff $u-v \in \mathrm{span}(\mathrm{ker}(R))$
& $R^\ell =$ smallest congruence containing $R$ \\
\hline
\end{tabular}
\caption{Correspondence between previous work and ours}
\label{tab:comp}
\end{table}

We briefly summarize our work highlighting the contributions. After introducing
the basic definitions and notation we recall the notion of a $K$-weighted
automaton (Section~\ref{subsec:kwa}) and $K$-weighted bisimulation
(Section~\ref{subsec:wbisim}). We use the definition of Bonchi \etal\
\cite{bonchi2012coalgebraic} but assume that the underlying set of weights is a
semiring $K$. We use the terms $K$-weighted automaton and $K$-weighted
bisimulation to distinguish our setting from that of
\cite{bonchi2012coalgebraic}. The fact that $K$-weighted automata are
$\cW$-coalgebras for an endofunctor $\cW$ on $\mathsf{Set}$ follows, and so does
the correspondence between $K$-weighted bisimulations and kernels of
$\cW$-homomorphims (Section~\ref{subsec:coal-w}). This result uses our definition of
kernels, so there is some novelty here. The proof that $K$-weighted bisimilarity $\sim_w$
coincides with $\cW$-behavioural equivalence $\approx_\cW$ is the one
in \cite{bonchi2012coalgebraic} presented in a different way.

The main focus of our work is on the coalgebras of the functor $\cL$ on
$\mathsf{SMod}$, the category of semimodules over the semiring $K$. We start by
defining the functor $\cL$ and proceed to define the notions of $K$-linear
automata as coalgebras for the functor $\cL$. We define their behaviour in terms
of weighted languages and present the final $\cL$-coalgebra
(Section~\ref{subsec:klwa}). This part is more or less similar to the treatment
in \cite{bonchi2012coalgebraic} as it does not rely on any vector space property
not enjoyed by a semimodule. The point of departure in our work from that of
\cite{bonchi2012coalgebraic} is the definition of $K$-linear relations and
$K$-linear bisimulations (Section~\ref{subsec:klb}) based on the kernel (in the
universal algebraic sense) of $K$-linear maps as mentioned above. We prove the
correspondence between $K$-linear bisimulations and kernels of
$\cL$-homomorphisms and establish the coincidence of the behavioural
equivalence $\approx_\cL$ and weighted language equivalence $\sim_l$
(Section~\ref{subsec:klb}). Although these results mirror those of
\cite{bonchi2012coalgebraic}, the proofs, other than the one for the coincidence
of $\approx_\cL$ and $\sim_l$, are new and more general -- they truly extend the
results from the vector space setting to the semimodule setting in a non-trivial
way. This is where the main contribution of this paper lies.

One desideratum is the existence of a partition refinement algorithm for
computing the weighted language equivalence $\sim_l$ for finitely
generated semimodules. Unfortunately, the results from Bonchi \etal\
\cite{bonchi2012coalgebraic} do not carry over to our semimodule setting. First,
even finitely generated semimodules do not have the descending chain property:
they can have an infinite descending chain of submodules. Also, 
weighted language equivalence is known to be undecidable for finite-state weighted
automata over the tropical semiring ~\cite{krob1994equality,almagor2020s}.
Instead, we offer an abstract procedure via the final sequence whose limit
exists in $\mathsf{SMod}$. The limit of the final sequence is shown to be
isomorphic to the final coalgebra for the functor $\cL$, essentially following
the reasoning in \cite{bonchi2012coalgebraic}. It is well-known that if the
final sequence stabilizes at an object that object is isomorphic to the final
coalgebra. For any $K$-linear weighted automaton there is a cone to the final
sequence such that the kernel of the arrows in the cone constitute a sequence of
$K$-linear relations. We show that these $K$-linear relations converge to
$\cL$-bisimilarity, \ie, language equivalence, in a finite number of steps in
case the state space of the automaton is a finitely generated \emph{Artinian}
semimodule (those satisfying the descending chain property)
as well as when a weaker condition holds (Section~\ref{subsec:lpr}).
Unfortunately, this does not give us an algorithm for computing the bisimilarity
relation in general, as this involves solving $K$-linear equations. Although this is
possible for specific semirings, such as $\R$, $\Z$ and $\N$, the problem is
known to be undecidable for certain semirings~\cite{narendran1996solving}. We
conclude the paper by comparing our work with the only existing coalgebraic
formulation of partition refinement for weighted automata over semirings, that of
K{\"o}nig and K{\"u}pper~\cite{konig2018generalized}
(Section~\ref{subsec:konig}).

\section{$K$-Weighted Automata and $K$-Weighted Bisimulation}
\label{sec:kwa}

This section is a mild generalization of the coalgebraic characterization of
weighted automata and weighted bisimulation in Bonchi \etal\
\cite{bonchi2012coalgebraic} from the setting of vector spaces to semimodules.
We start by fixing the notation and recalling the basic notions of semimodules
and coalgebras. Then we show how weighted automata over a semiring $K$ can be
seen as a coalgebra over a functor $\cW: \mathsf{Set} \ra \mathsf{Set}$ and
characterize weighted bisimilarity $\sim_w$ as the behavioural equivalence
$\approx_\cW$ for the functor $\cW$. The results and proofs are essentially
identical to those in \cite{bonchi2012coalgebraic}, as they do not use the
additional properties satisfied by vector spaces. We include them for the sake
of completeness. The only exception to this is the correspondence between
$K$-weighted bisimulation and kernels of $\cW$-homomorphisms in
Section~\ref{subsec:coal-w}, which uses our definition of kernel and is therefore
new.

\subsection{Notation and Preliminaries}
\label{subsec:prelims}

We denote sets by capital letters $X,Y,Z,\ldots$ and maps (\ie, functions) by
small letters $f,g,h,\ldots$. We denote the identity map on a set $X$ by $id_X$.
Given two maps $f: X \ra Y$ and $g: Y \ra Z$, their composition is denoted $g
\circ f : X \ra Z$. The product of two sets is denoted $X \times Y$ with the
projections $\pi_1 : X \times Y \ra X$ and $\pi_2: X \times Y \ra Y$. The
product of two maps $f_1: X_1 \ra Y_1$ and $f_2: X_2 \ra Y_2$ is $f_1 \times f_2
: X_1 \times X_2 \ra Y_1 \times Y_2$ defined by $(f_1 \times f_2)(x_1,x_2) =
(f_1(x_1),f_2(x_2))$. Given maps $f: X \ra Y$ and $g: X \ra Z$, $\langle f,g
\rangle: X \ra Y \times Z$ is the pairing map defined by $\langle f,g \rangle(x)
= (f(x),g(x))$. We use $\N$ for the set of natural numbers, $\Z$ for the
integers, $\R$ for the set of reals and $\R_+$ for the set of non-negative
reals.

The disjoint union of sets $X_1$ and $X_2$ is $X_1 + X_2$
with the injections $\iota_1 : X_1 \ra X_1 + X_2$ and $\iota_2: X_2 \ra
X_1 + X_2$. The sum of two maps $f_1: X_1 \ra Y_1$ and $f_2: X_2 \ra Y_2$
is $f_1 + f_2 : X_1 + X_2 \ra Y_1 + Y_2$ defined by $(f_1+f_2)(\iota_i(z)) =
\iota_i(f_i(z)))$ for $i=1,2$. We denote the set of maps from $X$ to $Y$ by
$Y^X$. For a map $f: X_1 \ra X_2$, the map $f^Y: X_1^Y \ra X_2^Y$ is defined by
$f^Y(g)=f\circ g$. This defines a functor $(\_)^Y : \mathsf{Set} \ra \mathsf{Set}$,
called the \emph{exponential functor}.
The set of all finite subsets of $X$ is denoted by
$\cP_\omega(X)$. For a finite set of letters $A$, $A^\ast$ denotes the set of
all finite words over $A$. We denote by $\epsilon$ the empty word, and by
$w_1w_2$ the concatenation of words $w_1,w_2 \in A^\ast$. The length of
a word $w$ is denoted by $|w|$.

If $R$ is an equivalence relation on a set $X$ we denote the set of equivalence
classes of $R$ by $X/R$ and the equivalence class of an element $x \in X$ by
$[x]_R$. The subscript is often dropped if the relation $R$ is clear from the
context. For a map $f : X \ra Y$, the \emph{kernel} of $f$ is the
equivalence relation $\mathrm{ker}(f)= \{(x_1,x_2) \mid f(x_1) = f(x_2)\}$.
Further, $f$ has a \emph{unique} (up to isomorphism) factorization through
$X/R$, $f = \mu_f \circ \varepsilon_f$ into a surjection $\varepsilon_f: X \ra
X/\mathrm{ker}(f)$ followed by an injection $\mu_f: X/\mathrm{ker}(f) \ra Y$
defined by $\varepsilon_f(x) = [x]$ and $\mu_f([x])=f(x)$.

\subsection{Semirings and Semimodules}
\label{subsec:semimod}

Semirings and semimodules generalize the notions of fields and vector spaces,
respectively. A \emph{semiring} $(K,+,.,0,1)$ consists of a commutative monoid
$(K,+,0)$ and a monoid $(K,.,1)$ such that the product distributes over the sum
on both sides and $k.0=0.k=0$ for all $k \in K$. We will refer to the semiring
$K$ when the operations are understood. A \emph{semiring module}, or simply a
\emph{semimodule} $V$ over a semiring $K$ is a commutative monoid $(V,+,0)$
together with an action $. : K \times V \ra V$ such that for all
$k,k_1,k_2 \in K$ and $v,v_1,v_2 \in V$:
\begin{align}
&(k_1 + k_2).v = k_1.v + k_2.v \myquad[3] (k_1.k_2).v = k_1.(k_2.v) \\
&k.(v_1+v_2) = k.v_1 + k.v_2 \myquad[3] 1.v = v\\
&0.v = 0
\end{align}
A $K$-linear map between two semimodules $V$ and $W$ (over the semiring $K$) is a
map $f: V \ra W$ satisfying $f(v_1+v_2)=f(v_1) + f(v_2)$ and $f(k.v)=k.f(v)$ for
all $v,v_1,v_2 \in V$ and $k \in K$.

An equivalence relation $R$ on a semimodule $V$  is called a \emph{congruence}
if all the semimodule operations are compatible with $R$, \ie,
for all $k \in K$ and $u_1,u_2,v_1,v_2 \in V$, $u_1Rv_1$ and $u_2Rv_2$ imply
$k.u_1Rk.u_2$ and $(u_1+u_2)R(v_1+v_2)$. It is easy to verify that
the quotient $V/R$ has a semimodule structure given by the operations
$k.[u] = [ku]$ and $[u]+[v] = [u+v]$. The fact that these are well-defined, \ie,
independent of the choice of representative of an equivalence class, is a consequence
of $R$ being a congruence.

For a $K$-linear map $f : V \ra W$, the \emph{kernel} of $f$ is the equivalence
relation $\mathrm{ker}(f)= \{(v_1,v_2) \mid f(v_1) = f(v_2)\}$. It can be shown
that $\mathrm{ker}(f)$ is a congruence. It follows that $V/\mathrm{ker}(f)$ is a
semimodule. Further, every $K$-linear map $f$ has a \emph{unique} (up to
isomorphism) factorization into two $K$-linear maps, a surjection $\varepsilon_f: V \ra
V/\mathrm{ker}(f)$ followed by an injection $\mu_f: V/\mathrm{ker}(f) \ra W$ defined
by $\varepsilon_f(u) = [u]$ and $\mu_f([u])=f(u)$. In the following, when we refer to the kernel
of a $K$-linear map $f$, we mean the kernel of $f$ as defined above and
\emph{not} the set $\{u \mid f(u)=0\}$. This usage is common in universal
algebra~\cite{sankappanavar1981course}.

Let $V$ be a $K$-semimodule. A nonempty subset $U$ of $V$ is called a
\emph{subsemimodule} of $V$ if $U$ is closed under addition and scalar multiplication.
The intersection $\bigcap_{i \in I} U_i$ of any family $\{U_i\}_{i \in I}$ of
subsemimodules of $V$ is clearly a subsemimodule of $V$. If $U$ is any nonempty
subset of $V$ then the intersection of all subsemimodules of $V$ containing $U$
is called the subsemimodule \emph{generated by} $U$, and is denoted by
$\mathrm{span}(U)$. It is easy to check that $\mathrm{span}(U)$ is the set of
finite linear combinations of elements of $U$
\[ \mathrm{span}(U) = \{\sum_{i=1}^n k_i u_i \mid n \in \N, u_i \in U \mbox{ for }
1 \leq i \leq n \}. \]
If $V=\mathrm{span}(U)$ then $U$ is called a \emph{generating set for} $V$. If
$V$ has a finite generating set it is called \emph{finitely generated}. One can
verify that given a $K$-linear map $f:V \ra W$, $\mathrm{ker}(f)$ is a
subsemimodule of $V \times V$.

Semimodules over a semiring $K$ and $K$-linear maps form the category
$\mathsf{SMod}$. $\mathsf{SMod}$ has products $V \times W$, and the set of all
maps $V^A$ from a set $A$ to a semimodule $V$ has a natural semimodule structure
defined pointwise: for $f,g \in V^A$, $f+g \in V^A$ is defined by $(f+g)(a)=
f(a)+g(a)$ and $(k.f)(a)=k.f(a)$. For a set $X$, the set of all maps $f: X
\ra K$ with \emph{finite support}, \ie, the set $\{x \mid f(x) \neq 0\}$ is
finite, is denoted $K(X)$. Its elements are conveniently represented as formal
sums $\sum_{x \in X}k_x.x$ by writing $k_x = f(x)$. In other words,
$K(X)=\mathrm{span}(X)$, where we identify an element $x \in X$ with the
map $\eta_X(x)=\delta_x:X \ra K$ where $\delta_x$ is the Kronecker delta
that maps $x$ to $1$ and everything else to $0$. Note that only a finite number
of $k_x$ are non-zero in the formal sum. $K(X)$ is called the \emph{free
semimodule generated by} $X$ over $K$ and satisfies the following universal
property. Given any map $f : X \ra V$ from a set $X$ to a semimodule $V$ there
exists a unique $K$-linear map $f^\sharp: K(X) \ra V$ which extends $f$. The map
$f^\sharp$ is just the \emph{linear extension} of $f$ \ie, $f^\sharp(\sum_{x \in
X}k_x.x) =\sum_{x \in X}k_x.f(x)$. This is shown in the commuting diagram below
where $\eta_X:X\ra K(X)$ is the inclusion map.
\begin{equation}
\label{diag:free}
\begin{tikzcd}
X \arrow[rd, "f"] \arrow[r, hook, "\eta_X"] & K(X) \arrow[d, dashed, "f^\sharp" near start] \\
& V
\end{tikzcd}
\end{equation}

\subsection{Coalgebras}
\label{subsec:coal}

Given an endofunctor $\cF:\cC \ra \cC$ on a category $\cC$, an $\cF$-coalgebra is a
$\cC$-object $X$ together with a $\cC$-arrow $f:X \ra \cF X$. In many categories
the pair $(X,f)$ represents a transition system such as a deterministic,
nondeterministic or probabilistic automaton~\cite{rutten2000universal}. A
\emph{morphism of $\cF$-coalgebras}, or an \emph{$\cF$-homomorphism}, between
coalgebras $(X,f)$ and $(Y,g)$ is a $\cC$-arrow $h: X \ra Y$ such that the
following diagram commutes.
\begin{equation}
\begin{tikzcd}
X \arrow[r, "h"] \arrow[d, "f"'] & Y \arrow[d, "g"] \\
\cF X \arrow[r, "Fh"] & \cF Y
\end{tikzcd}
\end{equation}
An $\cF$-coalgebra $(Y,g)$ is called \emph{final} is there is a \emph{unique}
$\cF$-homomorphism $\llbracket \_\rrbracket^F_X$ from any $\cF$-coalgebra
$(X,f)$ to $(Y,g)$. The final coalgebra represents the universe of all possible
$\cF$-behaviours and the arrow $\llbracket \_\rrbracket^F_X$ maps every element
(or \emph{state}) of a coalgebra $X$ to its
behaviour~\cite{rutten2000universal}. Two states $x_1,x_2 \in X$ are said to be
\emph{$\cF$-behaviourally equivalent}, denoted $\approx_\cF$, iff $\llbracket
x_1\rrbracket^\cF_X = \llbracket x_2\rrbracket^\cF_X$.

\subsection{$K$-Weighted Automata}
\label{subsec:kwa}

A weighted automaton~\cite{droste2009handbook} is a generalization of a
nondeterministic finite automaton where each transition and each state is
assigned a weight in a semiring $K$. We follow the definition in
\cite{bonchi2012coalgebraic}. Formally, for a semiring $K$, a \emph{$K$-weighted
automaton} ($K$-WA in short) with input alphabet $A$ is a pair $(X, \langle o,t
\rangle)$ where $X$ is a set of states, $o: X \ra K$ is a output map and $t:
X \ra (K^X)^A$ is the transition map. The state $x$ can make a transition to
state $y$ on input $a \in A$ with weight $k \in K$ iff $t(x)(a)(y)=k$. A weight
of zero means there is no transition. Note that the set of states $X$ may be
infinite in general. We often use $X$ to refer to the $K$-WA $(X, \langle o,t
\rangle)$ when the output and transition maps are clear from the context.

\subsection{$K$-Weighted Bisimulation}
\label{subsec:wbisim}

The notion of weighted bisimulation~\cite{buchholz2008bisimulation} generalizes
the well-known notion from ordinary transition systems~\cite{milner1989communication} to
finite-state weighted automata. We follow the definition in
\cite{bonchi2012coalgebraic} which applies to infinite state spaces, but with
\emph{finite branching}: for all $x \in X, a \in A, t(x)(a)(y) \neq 0$ for
only finitely many $y$. In the following we assume the finite branching condition for
weighted automata without stating it explicitly.

\begin{defi} 
Let $M=(X,\langle o,t \rangle)$ be a $K$-weighted automaton. An equivalence
relation $R$ on $X$ is a \emph{$K$-weighted bisimulation} on $M$ if the following
two conditions hold for all $x_1,x_2 \in X$:
\be
\item $o(x_1)=o(x_2)$, and
\item for all $a \in A$ and $y \in X$,
$\sum_{y' \in [y]_R} t(x_1)(a)(y') =  \sum_{y' \in [y]_R} t(x_2)(a)(y')$.
\ee
\end{defi}

The largest $K$-weighted bisimulation relation on $M$ is called \emph{$K$-weighted bisimilarity},
and is denoted by $\sim_w$. It exists because an arbitrary union of $K$-weighted bisimulation
relations is a $K$-weighted bisimulation.


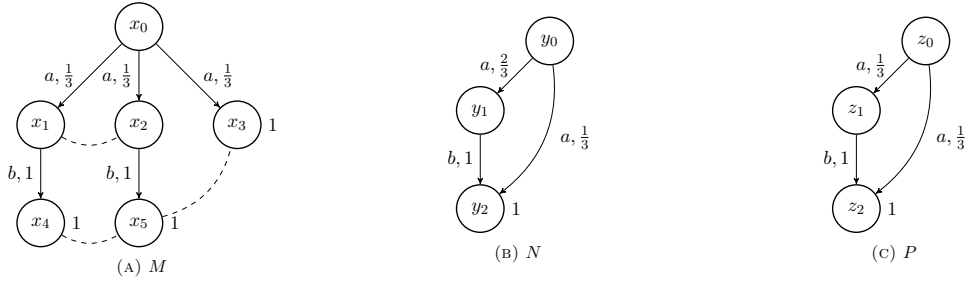
\begin{figure}[t]
\caption{Bisimulation quotient of weighted automata over different semirings}
\vspace{0.5cm}
\label{fig:bisim}
\begin{adjustbox}{width=15cm}
\begin{subfigure}{.5\textwidth}
\centering
\begin{tikzpicture}
\node[state] (q0) {$x_0$}; 
\node[state,below= of q0] (q2) {$x_2$}; 
\node[state,left= of q2] (q1) {$x_1$}; 
\node[state,right = of q2, label=right:$1$] (q3) {$x_3$}; 
\node[state,below= of q1, label=right:$1$] (q4) {$x_4$}; 
\node[state,below= of q2, label=right:$1$] (q5) {$x_5$}; 
\draw (q0) edge[left] node [xshift=-5]{$a,\frac{1}{3}$}(q1);
\draw (q0) edge[left] node{$a,\frac{1}{3}$} (q2);
\draw (q0) edge[right] node [xshift=5]{$a,\frac{1}{3}$} (q3);
\draw (q1) edge[left] node{$b,1$} (q4);
\draw (q2) edge[left] node{$b,1$} (q5);
\draw (q1) edge[-,dashed,bend right] (q2);
\draw (q4) edge[-,dashed,bend right] (q5);
\draw (q5) edge[-,dashed,bend right] (q3);
\end{tikzpicture}
\caption{$M$}
\end{subfigure}%
\begin{subfigure}{.5\textwidth}
\centering
\begin{tikzpicture}
\node[state] (p0) {$y_0$}; 
\node[state,below left= of p0] (p1) {$y_1$}; 
\node[state,below= of p1, label=right:$1$] (p2) {$y_2$}; 
\draw 
	(p0) edge[left,pos=0.2] node [xshift=-5] {$a,\frac{2}{3}$} (p1)
 	(p0) edge[bend left, right] node [xshift=5]{$a,\frac{1}{3}$} (p2)
 	(p1) edge[left] node{$b,1$} (p2);
\end{tikzpicture}
\caption{$N$}
\end{subfigure}%
\begin{subfigure}{.5\textwidth}
\centering
\begin{tikzpicture}
\node[state] (r0) {$z_0$}; 
\node[state,below left= of r0] (r1) {$z_1$}; 
\node[state,below= of r1, label=right:$1$] (r2) {$z_2$}; 
\draw 
	(r0) edge[left,pos=0.2] node [xshift=-5]{$a,\frac{1}{3}$} (r1)
 	(r0) edge[bend left, right] node [xshift=5]{$a,\frac{1}{3}$} (r2)
 	(r1) edge[left] node{$b,1$} (r2);
\end{tikzpicture}
\caption{$P$}
\end{subfigure}%
\end{adjustbox}  
\end{figure}

\begin{exa}
This example illustrates bisimulation between weighted automata for different
semirings, and is adapted from \cite{buchholz2008bisimulation}.
Figure~\ref{fig:bisim}(A) shows a $K$-weighted automaton $M=(X_\cA, \langle
o_\cA,t_\cA \rangle)$ over the alphabet $A = \{a,b\}$ and the semiring
$K=(\R_+,+,\cdot,0,1)$. It has the set of states $X_M =
\{x_0,x_1,x_2,x_3,x_4,x_5\}$, the output map (shown in the figure as labels of
the states when they are non-zero) $o_\cA$ given by $\{x_0 \mapsto 0, x_1
\mapsto 0, x_2 \mapsto 0, x_3 \mapsto 1, x_4 \mapsto 1, x_5 \mapsto 1\}$ and the
transition map $t$ shown in the figure by the edges with their labels. For
example, $t(x_0)(a)(x_1) = \frac{1}{3}$. All missing transitions have weight
zero. Let $R$ be the smallest equivalence relation on $X_M$ containing the
pairs $\{(x_1,x_2),(x_3,x_5),(x_4,x_5)\}$. $R$ is shown in the figure as the dashed lines,
which join states in the same equivalence class. It is easily checked that $R$ is a
bisimulation relation on $M$, and is in fact the largest such. The automaton
$N$ in Figure~\ref{fig:bisim}(B) is the one obtained from $M$ by quotienting
by $R$, where the quotient operation is defined in Section~\ref{subsec:coal-w}.
Here the set of states is $\{y_0,y_1,y_2\}$ and the output map is
given by $\{ y_0 \mapsto 0, y_1 \mapsto 0, y_2 \mapsto 1\}$.

Now consider a different semiring $K'=([0,1],\max,\cdot,0,1)$ and consider $M$,
the automaton on the left, as a $K'$-weighted automaton. The relation $R$ above
is again the largest bisimulation on $M$, but the quotient automaton $P$ has
different weights on transitions, and is shown in Figure~\ref{fig:bisim}(C).
Here the set of states is $\{z_0,z_1,z_2\}$ and the output map is given by $\{
z_0 \mapsto 0, z_1 \mapsto 0, z_2 \mapsto 1\}$. Notice the difference in the
weights on the edges $(y_0,y_1)$ and $(z_0,z_1)$. The former is obtained by
addition whereas the latter by the $\max$ operation on the pair
$(\frac{1}{3},\frac{1}{3})$.
\end{exa}

\subsection{Coalgebraic Model for $K$-Weighted Automata and $K$-Weighted Bisimulation}
\label{subsec:coal-w}

Following \cite{bonchi2012coalgebraic}, we now exhibit a functor $\cW:
\mathsf{Set} \ra \mathsf{Set}$ such that a $\cW$-coalgebra is just a
$K$-weighted automaton and $\approx_\cW$ is exactly $K$-weighted bisimilarity.

\begin{defi}
For a semiring $K$ the \emph{valuation functor} $K(\_): \mathsf{Set} \ra \mathsf{Set}$
is defined by the mappings $X \mapsto K(X)$ on sets $X$ and
$X\stackrel{h}{\ra} Y \mapsto K(X) \stackrel{K(h)}{\ra} K(Y)$ on maps,
where $K(h)$ sends $\sum_{x \in X} k_xx \in K(X)$ to
$\sum_{y \in Y} k_yy \in K(Y)$ with $k_y = \sum_{x \in h^{-1}(y)}k_x$.
\end{defi}

Recall that for a given set $C$, the functor $C \times \_: \mathsf{Set} \ra
\mathsf{Set}$ sends a set $X$ to $C\times X$ and a map $f:X\ra Y$ to the map
$id_C\times f$. The functor $\cW: \mathsf{Set} \ra \mathsf{Set}$ is
defined by $\cW = K \times (K(\_))^A$ where $(\_)^A$ is the exponential
functor defined earlier. Thus a $\cW$-coalgebra $f: X \ra \cW X$ on
a set $X$ constitutes a pair of maps $\langle o,t \rangle$ where $o: X \ra
K$ and $t: X \ra K(X)^A$. In other words, a $\cW$-coalgebra is identical to a $K$-weighted
automaton $(X,\langle o,t \rangle)$ and vice versa under the assumption of finite branching,
since $K(X)$ is the set of maps $G : X \ra K$ with finite support, which means $t$
satisfies the finite branching property.

It is shown in Bonchi \etal\ \cite{bonchi2012coalgebraic} that the functor $\cW$, being
\emph{bounded}, has a final coalgebra $(\Omega,\omega)$.
Moreover, the behavioural equivalence $\approx_\cW$ coincides with $K$-weighted
bisimilarity $\sim_w$. It is also shown that $K$-weighted bisimilarity is
strictly included in weighted language inclusion. For completeness we
recall the proof by Bonchi \etal\ of the coincidence of the two relations
$\approx_\cW$ and $\sim_w$.

Recall that a map $h: X \ra Y$ is a $\cW$-homomorphism between weighted
automata, \ie, $\cW$-coalgebras, $(X,\langle o_X,t_X \rangle)$ and $(Y,\langle
o_Y,t_Y \rangle)$ when the following diagram commutes.
\begin{equation}
\begin{tikzcd}
X \arrow[r, "h"] \arrow[d, "{\langle o_X, t_X \rangle}"'] & 
	Y \arrow[d, "{\langle o_Y, t_Y \rangle}"]  \\
K \times K(X)^A \arrow[r,"{id_K \times K(h)^A}"'] & K \times K(Y)^A
\end{tikzcd}
\end{equation}
In words, for all $x \in X$, $y \in Y$, $a \in A$ \[o_X(x)=o_Y(h(x)) \myquad[2]
\mbox{and} \myquad[2] \sum_{x' \in h^{-1}(y)} t_X(x)(a)(x') = t_Y(h(x)(a)(y). \]
For any $\cW$-homomorphism $h: (X,\langle o_X,t_X \rangle) \ra (Y,\langle
o_Y,t_Y \rangle)$, the equivalence relation $\mathrm{ker}(h)$ is a weighted
bisimulation since $h(x_1)=h(x_2)$ implies 
\[ o_X(x_1)=o_Y(h(x_1))=o_Y(h(x_2))=o_X(x_2) \]
and for all $a \in A$, for all $y \in Y$ 
\[\sum_{x'' \in h^{-1}(y)}t_X(x_1)(a)(x'')=t_Y(h(x_1))(a)(y)=t_Y(h(x_2))(a)(y)=
\sum_{x'' \in h^{-1}(y)}t_X(x_2)(a)(x'') \]
which in turn implies that for all $x' \in X$ 
\[ \sum_{(x',x'') \in \mathrm{ker}(h)}t_X(x_1)(a)(x'') = \sum_{(x',x'') \in
\mathrm{ker}(h)}t_X(x_2)(a)(x''). \]
Conversely, every $K$-weighted bisimulation $R$ on $(X,\langle o_X,t_X \rangle)$
induces a coalgebra structure $(X/R,\langle o_{X/R},t_{X/R} \rangle)$ on the
quotient set $X/R$ where $o_{X/R}: X/R \ra K$ and $t_{X/R}: X/R \ra (X/R)^A$ are
defined by 
\[ o_{X/R}[x]= o_X(x) \myquad[2] \mbox{and} \myquad[2]
t_{X/R}[x_1](a)([x_2]) = \sum_{x' \in [x_2]}t_X(x_1)(a)(x'). \] 
As $R$ is a $K$-weighted bisimulation, both $o_{X/R}: X/R \ra K$ and $t_{X/R}:
X/R \ra (X/R)^A$ are well-defined, \ie, independent of the choice of representative
of an equivalence class. Now, the map $\varepsilon_R: X \ra X/R$ which
sends $x$ to its equivalence class $[x]_R$ is a $\cW$-homomorphism. Therefore we
have the following commuting diagram, where the dashed arrows constitute the
unique $\cW$-homomorphisms to the final coalgebra $(\Omega,\omega)$.
\begin{equation}
\label{diag:final}
\begin{tikzcd}[row sep = large, row sep = large]
X \arrow[r, "\varepsilon_R"] 
\arrow[rr, bend left, dashed, "{\llbracket \_\rrbracket^\cW_X}"]
\arrow[d, "{\langle o_X, t_X \rangle}"'] 
&
X/R 
\arrow[d, "{\langle o_{X/R}, t_{X/R} \rangle}"] 
\arrow[r, dashed, "{\llbracket \_\rrbracket^{\cW}_{X/R}}"]
&
\Omega 
\arrow[d, "\omega"] 
\\
\cW(X) \arrow[r,"{\cW(\varepsilon_R)}"'] 
\arrow[rr, bend right, dashed, "{\cW(\llbracket \_\rrbracket^\cW_{X})}"]
& \cW(X/R) \arrow[r, dashed, "{\cW(\llbracket \_\rrbracket^{\cW}_{X/R})}"]
& \cW(\Omega)
\end{tikzcd}
\end{equation}

\begin{thm}
\label{them:equiv}
Let $(X \langle o,t \rangle)$ be a weighted automaton. Then for $x_1,x_2 \in X$,
$x_1 \sim_w x_2$ iff $x_1 \approx_\cW x_2$.
\end{thm}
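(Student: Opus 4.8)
The plan is to prove the two implications separately, in each case invoking exactly one of the two facts established in the paragraphs immediately preceding the theorem: (i) the kernel of any $\cW$-homomorphism is a $K$-weighted bisimulation, and (ii) any $K$-weighted bisimulation $R$ makes the quotient map $\varepsilon_R : X \ra X/R$ into a $\cW$-homomorphism. The only other ingredients are the maximality of $\sim_w$ among all $K$-weighted bisimulations and the uniqueness of the arrow into the final coalgebra $(\Omega,\omega)$.

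For the implication $x_1 \approx_\cW x_2 \Rightarrow x_1 \sim_w x_2$, I would observe that the final map $\llbracket \_ \rrbracket^\cW_X : X \ra \Omega$ is itself a $\cW$-homomorphism, so by fact (i) its kernel $\mathrm{ker}(\llbracket \_ \rrbracket^\cW_X)$ is a $K$-weighted bisimulation. By definition $x_1 \approx_\cW x_2$ means exactly $(x_1,x_2) \in \mathrm{ker}(\llbracket \_ \rrbracket^\cW_X)$, and since $\sim_w$ is the largest $K$-weighted bisimulation we get $\mathrm{ker}(\llbracket \_ \rrbracket^\cW_X) \subseteq \sim_w$, whence $x_1 \sim_w x_2$.

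For the converse $x_1 \sim_w x_2 \Rightarrow x_1 \approx_\cW x_2$, I would take $R = \sim_w$ and use fact (ii) to treat $\varepsilon_R$ as a $\cW$-homomorphism. Composing with the final map out of $X/R$ yields a $\cW$-homomorphism $X \ra \Omega$, which by uniqueness of the final arrow must equal $\llbracket \_ \rrbracket^\cW_X$; this is precisely the commutativity of the top row of Diagram~\eqref{diag:final}, giving $\llbracket \_ \rrbracket^\cW_X = \llbracket \_ \rrbracket^\cW_{X/R} \circ \varepsilon_R$. Now $x_1 \sim_w x_2$ means $\varepsilon_R(x_1) = [x_1]_R = [x_2]_R = \varepsilon_R(x_2)$, so applying $\llbracket \_ \rrbracket^\cW_{X/R}$ to both sides gives $\llbracket x_1 \rrbracket^\cW_X = \llbracket x_2 \rrbracket^\cW_X$, i.e.\ $x_1 \approx_\cW x_2$.

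There is no serious obstacle here once the two preparatory facts are in place: the argument is the standard coalgebraic one identifying behavioural equivalence with the kernel of the final map. The only point requiring care is the factorization in the second direction -- I must justify that $\llbracket \_ \rrbracket^\cW_{X/R} \circ \varepsilon_R$ really is the final arrow out of $X$, which follows from the uniqueness clause in the definition of a final coalgebra rather than from any explicit computation, and is exactly what the commuting diagram records. Everything else reduces to unwinding the definitions of $\mathrm{ker}$ and of $\approx_\cW$.
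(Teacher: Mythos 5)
Your proposal is correct and follows essentially the same route as the paper: the forward inclusion $\approx_\cW \subseteq \sim_w$ via the kernel of the final map being a bisimulation, and the converse by quotienting along a bisimulation and factoring the final arrow through $\varepsilon_R$ using Diagram~\eqref{diag:final}. The only cosmetic difference is that you fix $R = \sim_w$ where the paper takes an arbitrary bisimulation containing $(x_1,x_2)$; the argument is identical.
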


\begin{proof}
The equality of the two relations $\sim_w$ and $\approx_\cW$ follows by
diagram chasing. By definition, $\approx_\cW = \mathrm{ker}(\llbracket
\_\rrbracket^\cW_X)$ and this is is a $K$-weighted bisimulation since $\llbracket
\_\rrbracket^\cW_X$ is a $\cW$-homomorphism as witnessed by the curved and
dashed arrows in the diagram, which implies $\approx_\cW \subseteq \sim_w$. In the other
direction,
\begin{flalign*}
&x \sim_w x_2 \\
&\Ra (x_1,x_2) \in R \mbox{ for a $K$-weighted bisimulation } R \\
&\Ra \varepsilon_R(x_1) = \varepsilon_R(x_2), \mbox{ since } [x_1]_R = [x_2]_R \\
&\Ra \llbracket \varepsilon_R(x_1) \rrbracket^{\cW}_{X/R}
= \llbracket \varepsilon_R(x_2)\rrbracket^{\cW}_{X/R}\\
&\Ra \llbracket x_1 \rrbracket^\cW_X = \llbracket x_2 \rrbracket^\cW_X
\mbox{ from the diagram above} \\
&\Ra x_1 \approx_\cW x_2, \mbox{ by definition of } \approx_\cW. \tag*{\qedhere}
\end{flalign*}
\end{proof}

We now define $K$-weighted language equivalence $\sim_l$ for $K$-weighted automata
and show that $K$-weighted bisimilarity is a refinement of $K$-weighted 
language equivalence. The proof is from \cite{bonchi2012coalgebraic}.

A $K$-\emph{weighted language} over an alphabet $A$ and semiring $K$ is a map
$\sigma: A^\ast \ra K$ that assigns to each word $w \in A^\ast$ a weight in $K$.
For a $K$-WA $(X, \langle o,t \rangle)$ the map $\sigma_l: X \ra K^{A^\ast}$
assigns to each state $x \in X$ the $K$-weighted language \emph{recognized by}
$x$ and is defined by induction on $w$ as follows.
\[
\sigma_l(x)(w) =
\begin{cases}
o(x) \myquad[10]\,\,\,\, &\mbox{if } w = \epsilon \\
\sum_{x' \in X}(t(x)(a)(x')).\sigma_l(x')(w')  \quad &\mbox{if } w = aw'
\end{cases}
\]
Two states $x_1,x_2 \in X$ are said to be weighted language equivalent, denoted
$x_1 \sim_l x_2$, if $\sigma_l(x_1)(w) = \sigma_l(x_2)(w)$ for all $w \in A^\ast$.

\begin{prop}
\label{prop:refine}
For $K$-weighted automata, $\sim_w \subseteq \sim_l$.
\end{prop}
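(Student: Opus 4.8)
The plan is to prove the stronger statement that every $K$-weighted bisimulation $R$ is contained in $\sim_l$; since $\sim_w$ is itself a $K$-weighted bisimulation, the claim $\sim_w \subseteq \sim_l$ follows by taking $R = \sim_w$. So I would fix a $K$-weighted bisimulation $R$ and argue that every pair $(y_1,y_2) \in R$ satisfies $\sigma_l(y_1)(w) = \sigma_l(y_2)(w)$ for all $w \in A^\ast$, by induction on the length $|w|$. The crucial design choice is to keep the induction hypothesis quantified over \emph{all} pairs in $R$: for every $w'$ with $|w'| < |w|$ and every $(z_1,z_2) \in R$, one has $\sigma_l(z_1)(w') = \sigma_l(z_2)(w')$. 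Strengthening the hypothesis in this way is essential, because the inductive step needs $\sigma_l(\_)(w')$ to be constant on each $R$-class, which is precisely the hypothesis applied inside a class.

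For the base case $w = \epsilon$, the defining recursion gives $\sigma_l(y_i)(\epsilon) = o(y_i)$, and condition (1) of bisimulation, $o(y_1) = o(y_2)$, settles it. For the inductive step I would write $w = aw'$ and expand
\[ \sigma_l(y_1)(aw') = \sum_{x' \in X} t(y_1)(a)(x') \cdot \sigma_l(x')(w'), \]
a finite sum by the finite branching assumption. I would then regroup this sum over the $R$-equivalence classes: only finitely many classes $C$ contain an $x'$ with $t(y_1)(a)(x') \neq 0$, so the regrouping runs over a finite index set. By the induction hypothesis, $\sigma_l(\_)(w')$ takes a single value $s_C$ on each class $C$, since $|w'| < |aw'|$ and all elements of $C$ are $R$-related.

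The delicate point is that $K$ is only a semiring and its multiplication need not be commutative, so the scalar $s_C$ must be factored out on the \emph{right}, using right distributivity $(k_1 + k_2)\cdot k = k_1 \cdot k + k_2 \cdot k$ (the transition weight sits on the left in the recursion). This yields
\[ \sigma_l(y_1)(aw') = \sum_C \Big(\sum_{x' \in C} t(y_1)(a)(x')\Big)\cdot s_C. \]
Condition (2) of bisimulation says exactly that for each class $C = [y]_R$ the inner sums coincide, $\sum_{x' \in C} t(y_1)(a)(x') = \sum_{x' \in C} t(y_2)(a)(x')$. Substituting and then running the same regrouping backwards for $y_2$ gives $\sigma_l(y_1)(aw') = \sigma_l(y_2)(aw')$, completing the induction.

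I expect the main obstacle to be organizational rather than conceptual: making the induction hypothesis strong enough that $\sigma_l(\_)(w')$ is genuinely constant on $R$-classes, and being careful that, absent commutativity of $K$, the scalar $s_C$ is pulled out on the correct side. Finite branching is what guarantees that every sum in sight is finite, so no issues of convergence or reordering arise.
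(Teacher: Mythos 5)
Your proposal is correct and follows essentially the same route as the paper's proof: induction on $|w|$ with the hypothesis quantified over all $R$-related pairs, regrouping the transition sum over $R$-equivalence classes, using the induction hypothesis to make $\sigma_l(\_)(w')$ constant on each class, and invoking condition (2) of bisimulation to exchange $x_1$ for $x_2$. Your explicit care about factoring the common value $s_C$ out on the right (since multiplication in $K$ need not be commutative) matches what the paper's displayed computation does implicitly.
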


\begin{proof}
We prove by induction on the length of $w$ that if $R$ is a $K$-weighted bisimulation
on $X$ then for all $x_1,x_2 \in X$ and all $w \in A^\ast$, $(x_1,x_2) \in R$ implies
$\sigma_l(x_1)(w) = \sigma_l(x_2)(w)$. For the base case $w=\epsilon$, we have
$\sigma_l(x_1)(w) = o(x_1)$ and $\sigma_l(x_2)(w) = o(x_2)$ and $o(x_1)=o(x_2)$
since $R$ is a $K$-weighted bisimulation. For the inductive case, if $w = aw'$ then
\begin{flalign*}
\sigma_l(x_1)(w) =& \sum_{x' \in X}(t(x_1)(a)(x')).\sigma_l(x')(w') \\
=& \sum_{[x']_R \in X/R}((\sum_{x'' \in [x']_R} t(x_1)(a)(x'')).\sigma_l(x')(w'))
\mbox{ since by the induction hypothesis}\\
& \quad \mbox{for all } x'' \in [x']_R, \sigma_l(x'')(w')=\sigma_l(x')(w')
\mbox{ and by grouping the states} \\
& \quad x'' \in [x'] \\
=& \sum_{[x']_R \in X/R} ((\sum_{x'' \in [x']_R} t(x_2)(a)(x'')).\sigma_l(x')(w'))
\mbox{ since $(x_1,x_2) \in R$ and $R$ is a $K$-} \\
& \quad \mbox{weighted bisimulation} \\
=&\quad  \sigma_l(x_2)(w) \mbox{ by an argument similar to the first two lines above.} \tag*{\qedhere}
\end{flalign*}
\end{proof}

\begin{figure}[t]
\caption{Weighted language equivalence and weighted bisimulation}
\vspace{0.5cm}
\label{fig:lang-bisim}
\begin{adjustbox}{width=10cm}
\begin{tikzpicture}
\node[state] (q0) {$x_0$}; 
\node[state,below left= of q0] (q1) {$x_1$}; 
\node[state,below right = of q0] (q2) {$x_2$}; 
\node[state,below= of q1, label=right:$1$] (q3) {$x_3$}; 
\node[state,below= of q2, label=right:$1$] (q4) {$x_4$}; 
\draw (q0) edge[left,pos=0.2] node [xshift=-5]{$a,\frac{1}{2}$} (q1);
\draw (q0) edge[right,pos=0.2] node [xshift=5]{$a,\frac{1}{2}$} (q2);
\draw (q1) edge[left] node{$b,1$} (q3);
\draw (q2) edge[right] node{$c,1$} (q4);
\end{tikzpicture}
\hspace{1cm}
\begin{tikzpicture}
\node[state] (p0) {$y_0$}; 
\node[state,below = of p0] (p1) {$y_1$}; 
\node[state,below left= of p1, label=right:$1$] (p2) {$y_2$}; 
\node[state,below right= of p1, label=right:$1$] (p3) {$y_3$}; 
\draw 
	(p0) edge[left] node{$a,\frac{1}{2}$} (p1)
 	(p1) edge[left,pos=0.3] node{$b,1$} (p2)
 	(p1) edge[right,pos=0.3] node{$c,1$} (p3);
\end{tikzpicture}
\end{adjustbox}  
\end{figure}
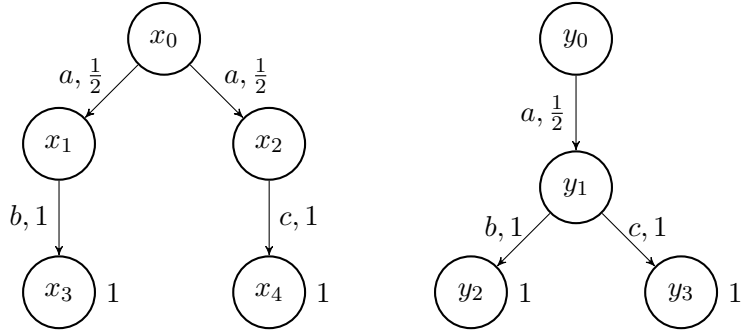

\begin{exa}
Figure~\ref{fig:lang-bisim} is an adaptation of a familiar example from the process
algebra literature that shows that $K$-weighted bisimilarity strictly
refines $K$-weighted language equivalence. Here $K=(\R_+,+,\cdot,0,1)$. The
states $x_0$ and $y_0$ are language equivalent, since $\sigma_l(x_0)(ab)=
\sigma_l(y_0)(ab) = \frac{1}{2}$ and $\sigma_l(x_0)(ac) = \sigma_l(y_0)(ac) =
\frac{1}{2}$ and $\sigma_l(x_0)(w)=\sigma_l(y_0)(w)=0$ for all other words $w$.
But it is easily checked that $x_0$ and $y_0$ are not bisimilar.
\end{exa}

\section{$K$-Linear weighted automata as Coalgebras over Semimodules}
\label{sec:$K$-LWA}

The goal of this section is to show that there is a functor $\cL: \mathsf{SMod} \ra
\mathsf{SMod}$ for which behavioural equivalence $\approx_\cL$ coincides with
weighted language equivalence $\sim_l$ of $K$-linear weighted automata, extending the
results of Bonchi \etal\ \cite{bonchi2012coalgebraic} to the category of
semimodules. Although the functor $\cL$ appears to be the same as in
\cite{bonchi2012coalgebraic} the underlying details are different. The latter are
based on a generalization of the notion of a \emph{linear weighted automaton}
in \cite{boreale2009weighted,bonchi2012coalgebraic} from the setting of vector
spaces to semimodules. We propose definitions of a $K$-linear relation and a
$K$-linear bisimulation that generalize the notions of a linear relation and a
linear bisimulation from \cite{boreale2009weighted}. This is the central part of
the paper where the definitions and proofs do not mirror those in
\cite{boreale2009weighted,bonchi2012coalgebraic}. In particular the notion of
subspace of a vector space is replaced by that of the kernel (in the universal
algebraic sense) of a $K$-linear map. But it is remarkable that all proofs go
through and we obtain a true generalization of the concepts from vector spaces
to semimodules.

\subsection{$K$-Linear Weighted Automata}
\label{subsec:klwa}

The following definition is a generalization of a linear weighted automaton of
\cite{boreale2009weighted} from the setting of vector spaces and linear maps to
that of semimodules over a semiring $K$ and $K$-linear maps. Note that we use
the term ``$K$-linear weighted automaton" to distinguish it from the ``linear
weighted automaton" of \cite{boreale2009weighted,bonchi2012coalgebraic}.

\begin{defi}
A \emph{$K$-linear weighted automaton} ($K$-LWA in short) with input alphabet $A$ over
the semiring $K$ is a coalgebra for the functor $\cL = K \times (\_)^A:
\mathsf{SMod} \ra \mathsf{SMod}$.
\end{defi}

A $K$-LWA can be presented as a pair $(V,\langle o,t \rangle)$ where $V$ is a
semimodule over $K$ whose elements are called \emph{states}, $o: V \ra K$ is a
$K$-linear map assigning an \emph{output weight} to every state and $t:V\ra V^A$
is a $K$-linear \emph{transition map} that, given a current state $v$ and input
$a$, assigns a new state $t(v)(a)$. We write $\trans{v_1}{a}{v_2}$ for
$t(v_1)(a)=v_2$. We often use $V$ to refer to the $K$-LWA $(V,\langle o,t \rangle)$
when the output and transition maps are clear from the context.

The behaviour of $K$-LWA is described by weighted languages. The $K$-\emph{linear
weighted language recognized} by a state $v \in V$ of a $K$-LWA $(V,\langle o,t \rangle)$
is the map $\sigma_l(v): A^\ast \ra K$ defined by induction on words $\cW$ by:
\[
\sigma_l(v)(w) =
\begin{cases}
o(v),\myquad[5]\,\,\, &\mbox{if } w = \epsilon\\
\sigma_l(t(v)(a))(w') \myquad[2] &\mbox{if } w = aw'
\end{cases}
\]
Two states $v_1,v_2 \in V$ are said to be weighted language equivalent, denoted
$v_1 \sim_l v_2$, if $\sigma_l(v_1)(w) = \sigma_l(v_2)(w)$ for all $w \in A^\ast$.
Note that we overload the symbol $\sigma_l$ to denote weighted language
equivalence for both $K$-weighted automata and $K$-linear weighted automata.
The context disambiguates which concept the symbol denotes. Later in this
section we show that $\sigma_l(v) = \llbracket v \rrbracket^\cL_V$,
the image of $v$ under the unique $\cL$-homomorphism from $V$ into the final 
$\cL$-coalgebra.

Given a $K$-WA $(X,\langle o,t \rangle)$ (see Section~\ref{subsec:kwa}), we can
construct a $K$-LWA $(K(X),\langle o^\sharp,t^\sharp \rangle)$, where $K(X)$ is
the free semimodule generated by $X$ and $o^\sharp$ and $t^\sharp$ are linear
extensions of $o$ and $t$. It can be shown that the above $K$-WA $X$ and the
$K$-LWA $K(X)$ have equivalent language behaviour, \ie, the corresponding
states $x$ and $\eta_X(x)$ recognize the same weighted language for all $x \in X$.


Recall that a $K$-linear map $h: V \ra W$ is an $\cL$-homomorphism between
$K$-LWA $(V,\langle o_V,t_V \rangle)$ and $(W,\langle o_W,t_W \rangle)$
when the following diagram commutes.
\begin{equation}
\begin{tikzcd}
V \arrow[r, "h"] \arrow[d, "{\langle o_V, t_V \rangle}"'] & 
	W \arrow[d, "{\langle o_W, t_W \rangle}"]  \\
K \times V^A \arrow[r,"{id_K \times h^A}"'] & K \times W^A
\end{tikzcd}
\end{equation}
In words, for all $v \in V$, $a \in A$, $o_V(v)=o_W(h(v))$ and $h(t_V(v)(a))=
t_W(hv)(a)$. 

For the special case when the $K$-LWA $V=K(X)$ and $W=K(Y)$ for given $K$-WA $X$
and $Y$ as above, we have the following situation. For a map $h:X \ra Y$, the
map $K(h): K(X) \ra K(Y)$ is the unique linear extension of $\eta_Y \circ h: X
\ra K(Y)$ and is hence $K$-linear. If $h$ is a $\cW$-homomorphism between the $K$-WA
$(X, \langle o_X,t_X \rangle)$ and $(Y, \langle o_Y,t_Y \rangle)$ then $K(h)$ is
an $\cL$-homomorphism between the $K$-LWA $(K(X), \langle o^\sharp_X,t^\sharp_X
\rangle)$ and $(K(Y), \langle o^\sharp_Y,t^\sharp_Y \rangle)$.

Bonchi \etal\ \cite{bonchi2012coalgebraic} showed that the final $\cL$-coalgebra
is defined on the set of all weighted languages $K^{A^\ast}$ as follows.
Consider the structure $(K^{A^\ast},\epsilon,d)$ with the output map $\epsilon$
and the transition map $d$ where $\epsilon:K^{A^\ast} \ra K$, called the
\emph{empty map}, is defined by $\epsilon(\sigma)=\sigma(\epsilon)$ and
$d:K^{A^\ast} \ra (K^{A^\ast})^A$ is defined by $d(\sigma)(a)= \sigma_a$ where
$\sigma_a: A^\ast \ra K$ is the \emph{$a$-derivative} of $\sigma$:
\[ \sigma_a(w)=\sigma(aw).\]
We first show that the map $d$ is $K$-linear. If $\sigma_1$ and $\sigma_2$ are two
weighted languages in $K^{A^\ast}$, $k_1, k_2 \in K$, $a \in A$ and $w \in A^\ast$
then
\begin{flalign*}
d(k_1\sigma_1 + k_2\sigma_2)(a)(w) &= (k_1\sigma_1 + k_2\sigma_2)(aw) \\
&= (k_1\sigma_1)(aw) + (k_2\sigma_2)(aw) \\
&= k_1\sigma_1(aw) + k_2\sigma_2(aw) \\
&= k_1d(\sigma_1)(a)(w) + k_2d(\sigma_2)(a)(w) 
\end{flalign*}
as desired. The proof of $K$-linearity of $\epsilon$ is similar. Hence
$(K^{A^\ast},\epsilon,d)$ is a coalgebra in $\mathsf{SMod}$. We now recall the
proof from \cite{bonchi2012coalgebraic} that it is the final coalgebra in $\mathsf{SMod}$.

\begin{thm}
There exists a unique $\cL$-homomorphism from any coalgebra $(V, \langle o,t \rangle)$
into the coalgebra $(K^{A^\ast},\epsilon,d)$.
\end{thm}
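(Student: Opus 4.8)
The plan is to show that the map $\sigma_l : V \ra K^{A^\ast}$ defined earlier is the required $\cL$-homomorphism and that it is the only one. First I would observe that an $\cL$-homomorphism $h : V \ra K^{A^\ast}$ is exactly a $K$-linear map making the coalgebra square commute, which unwinds to the two pointwise conditions $\epsilon(h(v)) = o(v)$ and $d(h(v))(a) = h(t(v)(a))$ for all $v \in V$ and $a \in A$. Spelling these out using $\epsilon(\sigma)=\sigma(\epsilon)$ and $d(\sigma)(a)=\sigma_a$, they read $h(v)(\epsilon)=o(v)$ and $h(v)(aw')=h(t(v)(a))(w')$. These are precisely the two clauses of the recursive definition of $\sigma_l$, so $\sigma_l$ is the only possible candidate, and the whole statement reduces to checking that this candidate is a well-defined morphism of $\mathsf{SMod}$.

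For existence I must verify two things: that $\sigma_l$ is $K$-linear (so that it lives in $\mathsf{SMod}$) and that it satisfies the homomorphism conditions. The latter hold by construction, since the defining clauses of $\sigma_l$ are literally the unwound conditions above. Linearity I would prove by induction on $|w|$: for $w=\epsilon$, $\sigma_l(k_1v_1+k_2v_2)(\epsilon)=o(k_1v_1+k_2v_2)=k_1o(v_1)+k_2o(v_2)$ by $K$-linearity of $o$; for $w=aw'$, I use that $t$ is $K$-linear, so $t(k_1v_1+k_2v_2)(a)=k_1t(v_1)(a)+k_2t(v_2)(a)$, and then apply the induction hypothesis at the strictly shorter word $w'$. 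This yields $\sigma_l(k_1v_1+k_2v_2)=k_1\sigma_l(v_1)+k_2\sigma_l(v_2)$, placing $\sigma_l$ in $\mathsf{SMod}$.

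For uniqueness, suppose $h$ is any $\cL$-homomorphism into $(K^{A^\ast},\epsilon,d)$. I would show $h(v)(w)=\sigma_l(v)(w)$ for all $v$ and $w$ by induction on $|w|$: the base case is $h(v)(\epsilon)=o(v)=\sigma_l(v)(\epsilon)$, and the inductive step uses $h(v)(aw')=h(t(v)(a))(w')$ together with the induction hypothesis applied to the state $t(v)(a)$ at the word $w'$. Note that uniqueness does not invoke linearity of $h$ at all — the coalgebra conditions already determine $h$ pointwise on every word — so $K$-linearity is needed only in the existence half, to certify that $\sigma_l$ is an arrow in the correct category.

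I expect no serious obstacle here: the recursion on words is well-founded because words are finite, and, crucially, the argument avoids subtraction entirely, using only the commutative-monoid and scalar-action axioms of a semimodule and the $K$-linearity of $o$ and $t$. Consequently the proof carries over verbatim from the vector-space setting of Bonchi \etal, which is exactly why this is one of the results that generalizes to semimodules without modification. The promised identity $\sigma_l(v)=\llbracket v\rrbracket^\cL_V$ is then immediate, since $\llbracket\_\rrbracket^\cL_V$ is by definition the unique $\cL$-homomorphism into the final coalgebra, and we have just identified it with $\sigma_l$.
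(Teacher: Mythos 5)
Your proposal is correct and follows essentially the same route as the paper's proof: identify $\sigma_l$ as the unique map making the coalgebra square commute in $\mathsf{Set}$ (by unwinding the two conditions into the defining recursion of $\sigma_l$), and then establish $K$-linearity of $\sigma_l$ by induction on the length of the word. Your write-up merely spells out the details the paper leaves as ``easy to check'' and ``routine induction.''
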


\begin{proof}
It is easy to check that the map $\llbracket \_ \rrbracket^\cL_V=\sigma_l:
V \ra K^{A^\ast}$ which maps every state $v \in V$ to the weighted language
$\sigma_l(v)$ is the only one that makes the following diagram commute in
$\mathsf{Set}$.

\begin{equation}
\begin{tikzcd}
V \arrow[r, "{\llbracket \_ \rrbracket^\cL_V}"] \arrow[d, "{\langle o, t \rangle}"'] & 
	K^{A^\ast} \arrow[d, "{\langle \epsilon,d \rangle}"]  \\
\cL(V)  \arrow[r,"{\cL(\llbracket \_ \rrbracket^\cL_V)}"'] & \cL(K^{A^\ast})
\end{tikzcd}
\end{equation}

To show that $\llbracket \_ \rrbracket^\cL_V$is $K$-linear we prove
that $\llbracket k_1v_1 + k_2v_2 \rrbracket^\cL_V(w)
= \llbracket k_1v_1 \rrbracket^\cL_V(w) + \llbracket k_2v_2 \rrbracket^\cL_V(w)$
for all $w \in A^\ast$ by a routine induction on the length of the word $w$.
\end{proof}

It follows that two states $v_1,v_2 \in V$ are $\cL$-behaviourally equivalent, \ie,
$v_1 \approx_{\cL} v_2$ iff they recognize the same weighted language. 


\subsection{$K$-Linear Bisimulation}
\label{subsec:klb}

In this section we generalize the definition of Boreale's linear weighted
bisimulation~\cite{boreale2009weighted} from a field to a semiring $K$. We show
that the two notions coincide in the special case when $K$ is a field, for
example $K =\R$, as in \cite{boreale2009weighted}. Starting with this section
almost all the results are our contribution and involve new concepts and proofs.

\begin{defi}
A binary relation $R$ on a $K$-semimodule $V$ is \emph{$K$-linear} if there
exists a $K$-semimodule $W$ and a $K$-linear map $f : V \ra W$ such that
$R=\mathrm{ker}{f} =\{(u,v) \mid f(u)=f(v)\}$. Such a relation is denoted by
$R_f$ for the given $f$. 
\end{defi}

It is immediate that a $K$-linear relation on a $K$-semimodule $V$ is an
equivalence relation which, in addition, is a congruence. Moreover, there is a
canonical way of turning \emph{any} relation $R$ on $K(X)$ into a $K$-linear relation
$R^\ell$ as follows. Let $R^\ell$ be the least congruence relation on $V$
containing $R$. $R^\ell$ is obtained by taking the intersection of all
congruences on $V$ that contain $R$ and is well-defined since the universal
relation is a congruence and the intersection of any family of congruences is a
congruence. The quotient set $V/R^\ell$ has a $K$-semimodule structure given by
$[u]+[v]=[u+v]$ and $k.[u]= [k.u]$, which are well-defined since $R^\ell$ is a
congruence. Let $f=\varepsilon_{R^\ell} : X \ra V/R^\ell$ be the map which sends
any elements $v \in V$ to its equivalence class $[v]_{R^\ell}$. It is easy to
check that $f$ is a $K$-linear map and $R^\ell = \mathrm{ker}(f)$ by
construction. Hence $R^\ell$ is a $K$-linear relation. The following lemma is an
easy consequence of the definitions.

\begin{lem}
For any binary relation $R$ on a $K$-module $V$, $R^\ell$ is the smallest $K$-linear relation
containing $R$.
\end{lem}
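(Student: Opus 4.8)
The plan is to split the lemma into its two constituent assertions and reduce each to a fact already in hand. The statement that $R^\ell$ is the \emph{smallest} $K$-linear relation containing $R$ means precisely that (i) $R^\ell$ is a $K$-linear relation with $R \subseteq R^\ell$, and (ii) $R^\ell \subseteq S$ for every $K$-linear relation $S$ with $R \subseteq S$. Part (i) is exactly what the paragraph preceding the lemma records, so no new work is needed there: $R \subseteq R^\ell$ because $R^\ell$ is a congruence built to contain $R$, and the identity $R^\ell = \mathrm{ker}(\varepsilon_{R^\ell})$ exhibits $R^\ell$ as the kernel of the $K$-linear quotient map $\varepsilon_{R^\ell} : V \ra V/R^\ell$, hence as a $K$-linear relation.

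The whole content therefore lies in the minimality claim (ii), and I would establish it in a single step. Let $S$ be an arbitrary $K$-linear relation on $V$ with $R \subseteq S$, say $S = \mathrm{ker}(g)$ for some $K$-linear map $g : V \ra W'$. Here I invoke the observation stated immediately after the definition of a $K$-linear relation, namely that every $K$-linear relation is in particular a congruence. Thus $S$ is a congruence on $V$ containing $R$. Since $R^\ell$ is by construction the intersection of \emph{all} congruences on $V$ containing $R$ --- equivalently, the least such congruence --- and $S$ is one member of that family, it follows that $R^\ell \subseteq S$. Combining with (i), $R^\ell$ is a $K$-linear relation containing $R$ that sits below every $K$-linear relation containing $R$, which is exactly the assertion of the lemma.

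I do not anticipate a genuine obstacle: the lemma is carried entirely by the two structural facts already established, that every $K$-linear relation is a congruence and that $R^\ell$ is the least congruence above $R$. The one point to state carefully is that no separate ``$K$-linear closure'' operation is needed beyond congruence-closure: the congruence $R^\ell$ is \emph{automatically} a $K$-linear relation because it is realized as a kernel via $\varepsilon_{R^\ell}$. In fact the argument shows that the classes of $K$-linear relations and of congruences on $V$ coincide, so that ``smallest $K$-linear relation containing $R$'' and ``smallest congruence containing $R$'' denote the same object; this is what makes the proof essentially immediate.
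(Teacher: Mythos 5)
Your proof is correct and is precisely the argument the paper intends: the paper gives no explicit proof, stating only that the lemma is an easy consequence of the definitions, and your two steps (that $R^\ell$ is a $K$-linear relation containing $R$ via $\varepsilon_{R^\ell}$, and that any $K$-linear relation containing $R$ is a congruence containing $R$ and hence contains the least such congruence $R^\ell$) spell out exactly that consequence. Your closing observation that $K$-linear relations and congruences coincide on $V$ is also accurate and is what makes the lemma immediate.
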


Note that for a given $K$-linear relation $R$ there may be two distinct $K$-linear maps
$f,g: V \ra W$ with the same codomain such that $R = \mathrm{ker}(f) =
\mathrm{ker}(g)$. On the other hand, all such maps factor uniquely through the
map $\varepsilon_f: V \ra V/\mathrm{ker}{f}$ that sends $v \in V$ to its
equivalence class $[v]$ in $\mathrm{ker}{f}$. Also, note that for any injective map
$f: V \ra W$, $R_f$ is the identity relation on $V$. For the zero map $0_{V,W}: V
\ra W$ which maps every element in $V$ to $0$, $R_{0_{V,W}}$ is the universal
relation on $V$.

We are now ready to define a $K$-linear bisimulation in analogy with linear bisimulation
in \cite{boreale2009weighted,bonchi2012coalgebraic}.

\begin{defi} Let $(V,\langle o,t\rangle)$ be a $K$-LWA, for a semimodule $K$.
A $K$-linear relation $R$ on $V$ is a $K$-\emph{linear bisimulation} if for all $(v_1,v_2)$ 
the following holds:
\be
\item $o(v_1)=o(v_2)$, and
\item $\forall a \in A, t(v_1)(a)\, R \, t(v_2)(a).$
\ee
\end{defi}

When $V$ is a finite dimensional vector space over the field $K$ the notions of
$K$-linear relations and the linear relations of Boreale coincide. Consider a linear
relation (as defined in \cite{boreale2009weighted}) $R$ over the vector space
$V=K(X)$ with basis $X$ where $K$ is a field. By definition, there exists a
subspace $U$ of the vector space $V$ over $K$ such that $uRv$ iff $u-v \in U$.
It is easy to check that the equivalence relation $R=R_U=\{(u,v) \mid u-v \in
U\}$ is a congruence. Then consider the canonical linear map $f_U: V \ra V/R_U$
to the quotient space which maps an element $w \in V$ to $[w]_R$. Now $w \in U$
iff $w = u-v$ for some $u,v \in V$ with $uRv$. Therefore, $f_U(w)=
f_U(u-v)=f_U(u) - f_U(v) = 0$ since $[u]=[v]$, \ie, $f_U$ sends all elements in
$U$ to $0$. It follows that $u-v \in U$ iff $f_U(u)=f_U(v)$ and thus $R$ is a
$K$-linear relation. Conversely, if $R$ is a $K$-linear relation over $V$ then $R=R_f$ for
some $f : V \ra W$. Let $U = \{w \mid f(w)=0\}$. Clearly $U$ is a subspace of
$V$ and $u -v \in U$ iff $f(u)=f(v)$, \ie, $R$ is a linear relation. In
addition, when $V$ is a vector space over a field $K$ it is routine to verify
that the notions of a linear weighted bisimulation as in
\cite{boreale2009weighted,bonchi2012coalgebraic} and a $K$-linear bisimulation
coincide, as the two definitions are identical. Hence, $K$-linear bisimulation
is a more general notion.

The following characterization of $K$-linear bisimulation is immediate from the definition.

\begin{lem}
\label{lem:kbisim}
Let $(V,\langle o,t\rangle)$ be a $K$-LWA, where $V$ is a
$K$-semimodule and $R$ a $K$-linear relation on $V$. Then $R$ is a $K$-linear bisimulation iff
\be
\item[(1)] $R \subseteq \mathrm{ker}(o)$, and
\item[(2)] $R$ is $t_a$-invariant, \ie, $uRv$ implies $t_auRt_av$, for each $a \in A$.
\ee
\end{lem}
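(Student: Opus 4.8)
The plan is to observe that each clause of the lemma is a direct reformulation of the corresponding clause in the definition of a $K$-linear bisimulation, so the argument is pure definition-chasing and proves both directions of the biconditional at once. First I would recall two pieces of notation: by the definition of the kernel of a map, $\mathrm{ker}(o)=\{(v_1,v_2)\mid o(v_1)=o(v_2)\}$, and $t_a$ denotes the $K$-linear map $V \ra V$ sending $v$ to $t(v)(a)$.

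For clause (1), the first condition in the definition of $K$-linear bisimulation demands $o(v_1)=o(v_2)$ for every $(v_1,v_2)\in R$. By the definition of $\mathrm{ker}(o)$, this is precisely the assertion that every pair in $R$ belongs to $\mathrm{ker}(o)$, \ie, $R\subseteq\mathrm{ker}(o)$. For clause (2), the second condition demands $t(v_1)(a)\,R\,t(v_2)(a)$ for every $(v_1,v_2)\in R$ and every $a\in A$. Rewriting $t(v_i)(a)$ as $t_a(v_i)$, this is exactly the statement that $u\,R\,v$ implies $t_a u\,R\,t_a v$ for each $a\in A$, which is the definition of $t_a$-invariance. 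Conjoining the two equivalences yields the lemma.

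Since both directions reduce to unfolding definitions, there is no substantive obstacle here. The only point requiring care is the bookkeeping: confirming that the universal quantification over pairs $(v_1,v_2)\in R$ appearing in the bisimulation definition corresponds exactly to the set-inclusion formulation in clause (1) and to the invariance formulation in clause (2), so that the two characterizations are logically interchangeable.
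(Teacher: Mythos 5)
Your proof is correct and matches the paper's treatment: the paper states this lemma without proof, noting only that it is ``immediate from the definition,'' and your definition-unfolding argument is exactly the intended justification. The two clauses are indeed nothing more than a rephrasing of the two conditions in the definition of a $K$-linear bisimulation using the kernel of $o$ and the notation $t_a$.
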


More generally, the kernel of a $\cL$-homomorphism between two $K$-LWA is a
$K$-linear bisimulation and conversely, for each $K$-linear bisimulation $R$ there exists a
$\cL$-homomorphism between two $K$-LWA whose kernel is $R$. This result mirrors
the one in \cite{bonchi2012coalgebraic} for linear weighted bisimulation between linear weighted
automata, albeit with a different notion of kernel.

\begin{prop}
\label{prop:lhomo}
Let $(V,\langle o_V,t_V \rangle)$ and $(W,\langle o_\cW,t_\cW \rangle)$ be two
$K$-LWA and $h: V \ra W$ an $\cL$-homomorphism. Then $\mathrm{ker}(h)$ is an
$K$-linear bisimulation on $(V,\langle o_V,t_V \rangle)$. Conversely, if $R$ is a
$K$-linear bisimulation on $(V,\langle o_V,t_V \rangle)$ then there exists a $K$-LWA
$(W,\langle o_W,t_W \rangle)$ and a $\cL$-homomorphism $h: V \ra W$ such that
$R=\mathrm{ker}(h)$.
\end{prop}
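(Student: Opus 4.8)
The plan is to prove the two directions separately, in each case reducing the statement about $K$-linear bisimulations to the characterization in Lemma~\ref{lem:kbisim}.

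For the forward direction, suppose $h: V \ra W$ is an $\cL$-homomorphism, so that $o_V = o_W \circ h$ and $h \circ (t_V)_a = (t_W)_a \circ h$ for each $a \in A$. Since $\mathrm{ker}(h)$ is a $K$-linear relation by definition (it is the kernel of the $K$-linear map $h$), it suffices to verify the two conditions of Lemma~\ref{lem:kbisim}. For condition (1), if $(v_1,v_2) \in \mathrm{ker}(h)$ then $h(v_1)=h(v_2)$, whence $o_V(v_1) = o_W(h(v_1)) = o_W(h(v_2)) = o_V(v_2)$, so $\mathrm{ker}(h) \subseteq \mathrm{ker}(o_V)$. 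For condition (2), again from $h(v_1)=h(v_2)$ we get $h((t_V)_a v_1) = (t_W)_a(h(v_1)) = (t_W)_a(h(v_2)) = h((t_V)_a v_2)$, so $(t_V)_a v_1 \,\mathrm{ker}(h)\, (t_V)_a v_2$, which is exactly $t_a$-invariance. This direction is a routine diagram chase.

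For the converse, I would take the quotient construction as the witness. Given a $K$-linear bisimulation $R$ on $V$, set $W = V/R$ (which is a $K$-semimodule since $R$, being $K$-linear, is a congruence) and let $h = \varepsilon_R : V \ra V/R$ be the canonical surjection $v \mapsto [v]_R$, which is $K$-linear with $\mathrm{ker}(\varepsilon_R) = R$ by construction. The work is to equip $W$ with output and transition maps $\langle o_W, t_W\rangle$ making it a $K$-LWA and making $h$ an $\cL$-homomorphism. The natural definitions are $o_W([v]) = o_V(v)$ and $t_W([v])(a) = [t_V(v)(a)]_R$. Well-definedness of $o_W$ follows from condition (1) of Lemma~\ref{lem:kbisim} ($R \subseteq \mathrm{ker}(o_V)$), and well-definedness of $t_W$ follows from the $t_a$-invariance in condition (2): if $[v_1] = [v_2]$ then $(v_1,v_2)\in R$, so $(t_V(v_1)(a), t_V(v_2)(a)) \in R$ and hence $[t_V(v_1)(a)] = [t_V(v_2)(a)]$. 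One then checks $o_W$ and $t_W$ are $K$-linear, using that the semimodule operations on $V/R$ are computed on representatives and that $o_V, t_V$ are $K$-linear. Finally, the homomorphism equations $o_W \circ \varepsilon_R = o_V$ and $\varepsilon_R \circ (t_V)_a = (t_W)_a \circ \varepsilon_R$ hold by the very definitions of $o_W$ and $t_W$.

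The main obstacle, such as it is, lies in the converse: verifying that the induced maps on the quotient are genuinely well-defined and $K$-linear. This is precisely where the universal-algebraic notion of kernel earns its keep — because $R = \mathrm{ker}(\varepsilon_R)$ is a congruence, the quotient $V/R$ inherits a semimodule structure and the factorization properties recalled in Section~\ref{subsec:semimod} apply directly, so no appeal to subtraction or additive inverses is needed. The only real subtlety is keeping straight that condition (2) of Lemma~\ref{lem:kbisim} is exactly the hypothesis that makes $t_W$ respect the equivalence classes; once that is observed, the remaining linearity checks are mechanical consequences of the congruence property and the linearity of $o_V$ and $t_V$.
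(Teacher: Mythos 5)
Your proof is correct and follows essentially the same route as the paper: a diagram chase via Lemma~\ref{lem:kbisim} for the forward direction, and the quotient $V/R$ with the canonical surjection $\varepsilon_R$ for the converse. If anything, your justification of well-definedness of $o_W$ and $t_W$ is slightly more careful than the paper's, which attributes it only to $\mathrm{ker}(f)$ being a congruence, whereas you correctly point out that conditions (1) and (2) of the bisimulation are what is actually needed.
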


\begin{proof}
Suppose $h: V \ra W$ is an $\cL$-homomorphism between $(V,\langle o_V,t_V
\rangle)$ and $(W,\langle o_W,t_W \rangle)$. We show that $\mathrm{ker}(h)$
satisfies clauses (1) and (2) of Lemma~\ref{lem:kbisim}. Take any $(v,w) \in
\mathrm{ker}(h)$. Since by definition $h(v)=h(w)$, we have $o_W(h(v)) = o_W(h(w))$ and
$t_W(h(v))(a) = t_W(h(w))(a)$ for all $a \in A$. Since $h$ is an
$\cL$-homomorphism, we have (1) $o_V(v) = o_W(h(v)) = o_W(h(w)) = o_V(w)$,
\ie, $\mathrm{ker}(h) \subseteq \mathrm{ker}(o_V)$ and (2) $h(t_V(v)(a)) =
t_W(h(v))(a) = t_W(h(w))(a) = h(t_V(w)(a))$, which means $(t_V(v)(a),
t_V(w)(a) \in \mathrm{ker}(h)$ \ie, $\mathrm{ker}(h)$ is $t_a$-invariant.

In the other direction, let $R$ be $K$-linear bisimulation on $(V,\langle o_V,t_V
\rangle)$, where $R=R_f$ for the map $f:V \ra W$. Let $f = \mu_f \circ
\varepsilon_f$ be the unique factorization of $f$ through 
$\varepsilon_f : V \ra V/\mathrm{ker}(f)$ that sends each $v \in V$ to its
equivalence class in $\mathrm{ker}(f)$. As we observed earlier,
$W=V/\mathrm{ker}(f)$ can be equipped with a $K$-LWA structure $(W,\langle o_W,t_W
\rangle)$ as follows. The $K$-linear map $o_W: W \ra K$ is defined as
$o_W([v])=o_V(v)$. The $K$-linear map $t_W:W \ra W^A$ defined as $t_W([v])(a) =
t_V(v)(a)$. These two maps are well-defined as ${ker}(f)$ is a congruence. It is
routine to verify that the $K$-linear map $h=\varepsilon_f: V \ra W$ is an
$\cL$-homomorphism and $\mathrm{ker}(h)=\mathrm{ker}(\varepsilon_f)$, and
therefore $R=\mathrm{ker}(h)$.
\end{proof}

\begin{thm}
\label{thm:lwb}
Let $(V, \langle o,t \rangle)$ be a $K$-LWA and let $\llbracket \_
\rrbracket^\cL_V: V \ra K^{A^\ast}$ be the unique $\cL$-homomorphism into the final
coalgebra. Then $\mathrm{ker}(\llbracket \_ \rrbracket^\cL_V)$ is the largest
$K$-linear bisimulation on $V$.
\end{thm}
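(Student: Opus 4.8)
The goal is to show that $\mathrm{ker}(\llbracket\_\rrbracket^\cL_V)$ is the largest $K$-linear bisimulation on $V$. This requires two things: first, that it \emph{is} a $K$-linear bisimulation, and second, that \emph{every} $K$-linear bisimulation is contained in it. The first part is essentially a direct application of Proposition~\ref{prop:lhomo}. Since $\llbracket\_\rrbracket^\cL_V : V \ra K^{A^\ast}$ is an $\cL$-homomorphism into the final coalgebra $(K^{A^\ast},\epsilon,d)$, Proposition~\ref{prop:lhomo} immediately tells us that its kernel is a $K$-linear bisimulation on $V$. So this half costs nothing beyond invoking the earlier result.

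**Largeness.**
For maximality I would take an arbitrary $K$-linear bisimulation $R$ on $V$ and show $R \subseteq \mathrm{ker}(\llbracket\_\rrbracket^\cL_V)$. By Proposition~\ref{prop:lhomo} (the converse direction), there is a $K$-LWA $(W,\langle o_W,t_W\rangle)$ and an $\cL$-homomorphism $h: V \ra W$ with $R = \mathrm{ker}(h)$. Now I would invoke finality: there is a unique $\cL$-homomorphism $\llbracket\_\rrbracket^\cL_W : W \ra K^{A^\ast}$, and by uniqueness of the arrow into the final coalgebra the triangle commutes, \ie\ $\llbracket\_\rrbracket^\cL_V = \llbracket\_\rrbracket^\cL_W \circ h$. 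The desired inclusion then follows by a short kernel computation: if $(v_1,v_2) \in R = \mathrm{ker}(h)$ then $h(v_1)=h(v_2)$, whence $\llbracket v_1 \rrbracket^\cL_V = \llbracket\_\rrbracket^\cL_W(h(v_1)) = \llbracket\_\rrbracket^\cL_W(h(v_2)) = \llbracket v_2 \rrbracket^\cL_V$, so $(v_1,v_2) \in \mathrm{ker}(\llbracket\_\rrbracket^\cL_V)$. This gives $\mathrm{ker}(h) \subseteq \mathrm{ker}(\llbracket\_\rrbracket^\cL_V)$, which is exactly $R \subseteq \mathrm{ker}(\llbracket\_\rrbracket^\cL_V)$.

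**Where the difficulty lies.**
Almost all the real content is packaged into the preceding results, so the proof is largely a finality argument stitched together with Proposition~\ref{prop:lhomo}. The one place demanding a little care is the factorization $\llbracket\_\rrbracket^\cL_V = \llbracket\_\rrbracket^\cL_W \circ h$: one must check that $\llbracket\_\rrbracket^\cL_W \circ h$ is itself an $\cL$-homomorphism from $V$ into $(K^{A^\ast},\epsilon,d)$ (a composite of two $\cL$-homomorphisms is an $\cL$-homomorphism, by pasting the two commuting squares), and then appeal to \emph{uniqueness} of such an arrow to conclude it equals $\llbracket\_\rrbracket^\cL_V$. Since the earlier theorem established both the existence and uniqueness of the homomorphism into $(K^{A^\ast},\epsilon,d)$ in $\mathsf{SMod}$, this step is clean, and I expect it to be the only genuine obstacle—though a mild one. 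I would make sure to state explicitly that the converse direction of Proposition~\ref{prop:lhomo} is what guarantees, for \emph{each} $K$-linear bisimulation, the existence of the witnessing $\cL$-homomorphism $h$, as this is the crux of the reduction.
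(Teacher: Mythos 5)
Your proposal is correct and follows essentially the same route as the paper: both halves use Proposition~\ref{prop:lhomo} exactly as the paper does, and the maximality argument via the factorization $\llbracket\_\rrbracket^\cL_V = \llbracket\_\rrbracket^\cL_W \circ h$ through finality, followed by the kernel inclusion $\mathrm{ker}(h) \subseteq \mathrm{ker}(\llbracket\_\rrbracket^\cL_W \circ h)$, is precisely the paper's argument. No gaps; your explicit remark that the composite of two $\cL$-homomorphisms must be checked to be an $\cL$-homomorphism before invoking uniqueness is the same observation the paper makes.
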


\begin{proof}
By Proposition~\ref{prop:lhomo}, $\mathrm{ker}(\llbracket \_ \rrbracket^\cL_V)$
is a $K$-linear bisimulation. Suppose $R$ is any $K$-linear bisimulation. Again by
Proposition~\ref{prop:lhomo}, there exists a $K$-LWA $(W,\langle o_W,t_W \rangle)$
and a $\cL$-homomorphism $f: V \ra W$ such that $R=\mathrm{ker}(f)$. Since
$(W,\langle o_W,t_W \rangle)$ is an $\cL$-coalgebra, there exists an
$\cL$-homomorphism $\llbracket \_ \rrbracket^\cL_W$ from $W$ to the final
coalgebra $K^{A^\ast}$. Therefore $\llbracket \_ \rrbracket^\cL_W \circ f: V \ra
K^{A^\ast}$, being the composition of two $\cL$-homomorphisms, is an
$\cL$-homomorphism. But $\mathrm{ker}\llbracket \_ \rrbracket^\cL_V$ is the
unique homomorphism from $V$ to $K^{A^\ast}$ by the finality of $K^{A^\ast}$ and
hence $\llbracket \_ \rrbracket^\cL_W \circ f = \mathrm{ker}\llbracket \_
\rrbracket^\cL_V$. Then $R = \mathrm{ker}(f) \subseteq \mathrm{ker}(\llbracket
\_ \rrbracket^\cL_W \circ f) = \mathrm{ker}(\llbracket \_ \rrbracket^\cL_V)$.
The set inclusion above is a consequence of the fact that $f(u)=f(v)$ implies
$g(f(u)=g(f(v))$ for all $g$ composable with $f$.
\end{proof}

\begin{cor}
\label{cor:lwb}
The $\cL$-behavioural equivalence relation $\approx_\cL$ on a $K$-LWA $V$ is the
largest $K$-linear bisimulation.
\end{cor}

\begin{proof}
By definition, $\approx_\cL = \{(v,w) \mid \llbracket v \rrbracket^\cL_V =
\llbracket w \rrbracket^\cL_V\} = \mathrm{ker}(\llbracket \_ \rrbracket^\cL_V)$.
The result follows from Theorem~\ref{thm:lwb}. \qedhere
\end{proof}

To summarize, for $K$-WA the largest $K$-weighted bisimulation $\sim_w$ is strictly
included in $K$-weighted language equivalence $\sim_l$ as shown in
Proposition~\ref{prop:refine}. Corollary~\ref{cor:lwb} shows that 
for $K$-LWA $K$-linear language equivalence 
coincides with the largest $K$-linear bisimulation. This raises the question:
what is the relationship between $K$-weighted bisimulation and $K$-linear
bisimulation? Again, our answer extends that of \cite{bonchi2012coalgebraic} to
the semimodule setting.

\begin{prop}
Let $(X, \langle o,t \rangle)$ be a $K$-WA and $(K(X), \langle o^\sharp,t^\sharp
\rangle)$ the $K$-LWA obtained from it as in Section~\ref{subsec:klwa}. If $R$
is a $K$-weighted bisimulation on $X$ then $R^\ell$ is a $K$-linear bisimulation
on $K(X)$.
\end{prop}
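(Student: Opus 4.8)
The plan is to verify the two conditions of Lemma~\ref{lem:kbisim} for the congruence $R^\ell$ on $K(X)$, where $R$ is regarded as a relation on $K(X)$ via the embedding $\eta_X$, i.e. identified with $\{(\eta_X(x_1),\eta_X(x_2)) \mid (x_1,x_2) \in R\}$. Since $R^\ell$ is by construction the smallest congruence on $K(X)$ containing $R$, it is already a $K$-linear relation, so it remains to establish (1) $R^\ell \subseteq \mathrm{ker}(o^\sharp)$ and (2) $R^\ell$ is $t^\sharp_a$-invariant for each $a \in A$, where $t^\sharp_a$ denotes the $K$-linear map $u \mapsto t^\sharp(u)(a)$. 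The organising principle throughout is minimality of $R^\ell$: to prove $R^\ell \subseteq \mathrm{ker}(g)$ for a $K$-linear map $g$, it suffices to show $R \subseteq \mathrm{ker}(g)$, because $\mathrm{ker}(g)$ is automatically a congruence. For condition (1) I would use $o^\sharp \circ \eta_X = o$ (the defining property of the linear extension), so that for $(x_1,x_2) \in R$ we get $o^\sharp(\eta_X(x_1)) = o(x_1) = o(x_2) = o^\sharp(\eta_X(x_2))$ since $R$ is a $K$-weighted bisimulation; hence $R \subseteq \mathrm{ker}(o^\sharp)$ and minimality yields $R^\ell \subseteq \mathrm{ker}(o^\sharp)$.

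For condition (2), fix $a \in A$ and let $\varepsilon_{R^\ell}: K(X) \ra K(X)/R^\ell$ be the quotient map, so $R^\ell = \mathrm{ker}(\varepsilon_{R^\ell})$. The composite $g = \varepsilon_{R^\ell} \circ t^\sharp_a$ is $K$-linear, and $t^\sharp_a$-invariance of $R^\ell$ is precisely the inclusion $R^\ell \subseteq \mathrm{ker}(g)$. By the minimality principle it therefore suffices to verify $R \subseteq \mathrm{ker}(g)$, that is,
\[ \varepsilon_{R^\ell}\bigl(t^\sharp_a\,\eta_X(x_1)\bigr) = \varepsilon_{R^\ell}\bigl(t^\sharp_a\,\eta_X(x_2)\bigr) \qquad \text{for } (x_1,x_2) \in R. \]

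This equality is the heart of the argument and the step I expect to be the main obstacle. Using that $t^\sharp$ is the linear extension of $t$ together with finite branching, I would write the finite expansion
\[ t^\sharp_a\,\eta_X(x_i) = \sum_{y \in X} t(x_i)(a)(y)\,\eta_X(y). \]
Applying the $K$-linear map $\varepsilon_{R^\ell}$ and then regrouping the terms according to the $R$-equivalence classes—which is legitimate precisely because $y \mathrel{R} y'$ forces $\varepsilon_{R^\ell}(\eta_X(y)) = \varepsilon_{R^\ell}(\eta_X(y'))$—gives, after choosing a representative for each class,
\[ \varepsilon_{R^\ell}\bigl(t^\sharp_a\,\eta_X(x_i)\bigr) = \sum_{[y]_R} \Bigl(\sum_{y' \in [y]_R} t(x_i)(a)(y')\Bigr)\,\varepsilon_{R^\ell}(\eta_X(y)). \]
The second condition in the definition of a $K$-weighted bisimulation states exactly that $\sum_{y' \in [y]_R} t(x_1)(a)(y') = \sum_{y' \in [y]_R} t(x_2)(a)(y')$ for every class, so the coefficient of each $\varepsilon_{R^\ell}(\eta_X(y))$ agrees for $x_1$ and $x_2$. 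Hence the two images coincide, establishing $R \subseteq \mathrm{ker}(g)$ and therefore $R^\ell \subseteq \mathrm{ker}(g)$, which completes condition (2). The only delicate points are that the regrouping ranges over finitely many nonzero terms (guaranteed by finite branching) and that the collapsed coefficient of each class is matched against the bisimulation condition; once these are in place, both clauses of Lemma~\ref{lem:kbisim} hold and $R^\ell$ is a $K$-linear bisimulation.
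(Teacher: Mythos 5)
Your proof is correct, but it takes a genuinely different route from the one in the paper. The paper's proof goes through the quotient weighted automaton $(X/R,\langle o_{X/R},t_{X/R}\rangle)$: it observes that $\varepsilon_R: X \ra X/R$ is a $\cW$-homomorphism, that $K(\varepsilon_R)$ is therefore an $\cL$-homomorphism, invokes Proposition~\ref{prop:lhomo} to conclude that $\mathrm{ker}(K(\varepsilon_R))$ is a $K$-linear bisimulation, and then separately proves the identity $\mathrm{ker}(K(\varepsilon_R)) = R^\ell$ by computing both relations explicitly as ``equality of class-wise coefficient sums.'' You instead verify the two clauses of Lemma~\ref{lem:kbisim} for $R^\ell$ directly, and your organising device --- that $R^\ell \subseteq \mathrm{ker}(g)$ for $K$-linear $g$ follows from $R \subseteq \mathrm{ker}(g)$ because $\mathrm{ker}(g)$ is itself a congruence containing $R$ and $R^\ell$ is the least such --- reduces everything to checking the two bisimulation conditions on generators, where they are exactly the hypotheses on $R$. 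This buys you something real: you never need the ``only if'' direction of the paper's characterization of $R^\ell$ (the claim that $u R^\ell v$ \emph{only if} the class sums agree), which is the least detailed step of the paper's argument, since it implicitly requires an induction over derivations of congruence membership. The paper's route, in exchange, exhibits a concrete $\cL$-homomorphism $K(\varepsilon_R)$ witnessing $R^\ell$ as a kernel and identifies the quotient coalgebra $K(X/R)$, which is reused in the surrounding narrative; your argument is more self-contained but produces only the relation, not the witnessing homomorphism onto $K(X/R)$. Your attention to finite branching in the regrouping step is exactly the right point to flag, and the identity $\varepsilon_{R^\ell}(\eta_X(y)) = \varepsilon_{R^\ell}(\eta_X(y'))$ for $y \mathrel{R} y'$ is correctly justified by $R \subseteq R^\ell = \mathrm{ker}(\varepsilon_{R^\ell})$.
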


\begin{proof}
Recall the quotient weighted automaton $(X/R,\langle o_{X/R},t_{X/R}\rangle)$
and the map $\varepsilon_R: X \ra X/R$ from Section~\ref{sec:kwa}. From
diagram~\ref{diag:final} we have $\varepsilon_R$ is a $\cW$-homomorphism
between $(X, \langle o,t \rangle)$ and $(X/R,\langle o_{X/R},t_{X/R}\rangle)$.
Earlier we have shown that $K(h)$ is
an $\cL$-homomorphism for every $\cW$-homomorphism $h$. Therefore,
$K(\varepsilon_R): K(X) \ra K(X/R)$ is an $\cL$-homomorphism between
$(K(X), \langle o^\sharp,t^\sharp \rangle)$ and $(K(X/R), \langle o^\sharp_{X/R},
t^\sharp_{X/R} \rangle)$. By Proposition~\ref{prop:lhomo}, $\mathrm{ker}(K(\varepsilon_R)$
is a $K$-linear bisimulation on $K(X)$.

It remains to show that $\mathrm{ker}(K(\varepsilon_R) = R^\ell$. Recall that
$K(\varepsilon_R): K(X) \ra K(X/R)$ maps $k_i x_i$ to 
\[ (\sum_{x_j \in [x_i]_R} k_i)[x_i]_R \]
and hence by linearity maps any element $u = \sum_{i=1}^n k_i x_i$ of $K(X)$ to
\[ \sum_{i=1}^n (\sum_{x_j \in [x_i]_R}k_j)[x_i]_R. \]
Therefore, for any  element $v = \sum_{i=1}^ m k'_i y_i$  of $K(X)$,
$(u,v) \in \mathrm{ker}(K(\varepsilon_R))$ iff for all $x \in X$,
\begin{equation}
\label{eqn:lwb}
\sum_{x_j \in [x]_R} k_j = \sum_{y_\ell \in [x]_R} k'_\ell.
\end{equation}
We show that $(u,v) \in R^\ell$ if the same condition holds. First, since
$R$ is an equivalence relation, and $R^\ell$ is the smallest congruence containing $R$,
$R^\ell$ satisfies the following clauses (i) $xRy \Ra k.x R^\ell k.y$ for all $k \in K$ and
(ii) $xRy \mbox{ and } x'Ry' \Ra (k.x + k'.x') R^\ell (k.y + k'.y')$ for all $k_1,k_2 \in K$, and
(iii) for any $u,v \in K(X)$, $u R^\ell v$ only if it can be derived from the rules (i) and (ii) above.
It follows that for $u$ and $v$ as above, $u R^\ell v$ iff for all $x \in X$, 
Equation~\ref{eqn:lwb} holds.
\end{proof}

\section{$K$-Linear Partition Refinement}
\label{sec:lpr}

We have shown that weighted language equivalence $\sim_l$ coincides with the
largest $K$-linear bisimulation $\approx_\cL$ on a $K$-LWA. Now let us turn to
the question of computing $\sim_l$, \ie, given two states $v_1$ and $v_2$ in a
$K$-LWA $(V,\langle o,t\rangle)$ the problem of deciding whether $v_1 \sim_l
v_2$. For finite representability we assume that the submodule $V$ is freely generated
by a finite set $X$, and therefore $o$ and $t$ have finite representations as matrices.

Boreale~\cite{boreale2009weighted} and Bonchi \etal\
\cite{bonchi2012coalgebraic} had proposed two versions of a partition refinement
algorithm for finding the largest linear weighted bisimulation on a linear
weighted automaton $(V,\langle o,t\rangle)$. Their algorithms were based on the
identification of linear bisimulations with certain subspaces of the vector
space of a linear weighted automaton over a field. Both algorithms start from
the kernel of the output map $o$ represented as a subspace of $V$, and
successively obtain smaller and smaller subspaces by requiring invariance under
the transitions until a fixed point is reached. For termination the algorithm
relies on the fact that there can only be a finite descending chain of subspaces
for a finite-dimensional vector space. Moreover, the algorithm computes a
a basis for each of the subspaces by solving systems of linear equations, using
the matrix representations of the maps $o$ and $t$.

Unfortunately, these algorithms cannot, in general, be lifted to the
semiring-semimodule framework even for finitely generated semimodules. First,
unlike vector spaces, finitely generated semimodules are not necessarily
\emph{Artinian}. An Artinian semimodule is one that satisfies the descending
chain property, \ie, there is no infinite descending chain of subsemimodules
ordered by inclusion. For example, the semimodule $(\N,+,0)$ over the semiring
$(\N,+,.,0,1)$, although finitely generated, is \emph{not} Artinian since the
subsemimodules $2\N \supset 4\N \supset 8\N \ldots$ form an infinite descending
chain. Second, even with an Artinian semimodule the above procedures are not
effective in general, as they depend on solving linear equations in $K$, a problem that is
undecidable for certain semirings~\cite{narendran1996solving}. It is also known
that weighted language equivalence is undecidable for finite-state weighted
automata over the tropical semiring $(\N \cup \{\infty\},\min,+,\infty,0)$
~\cite{krob1994equality,almagor2020s}.

But all is not lost as far as partition refinement is concerned. We can
generalize the forward algorithm of \cite{boreale2009weighted} and
\cite{bonchi2012coalgebraic} to a construction of the final coalgebra in
$\mathsf{SMod}$ based on a method of of Ad{\'a}mek and
Koubek~\cite{adamek1995greatest} that uses the notion of the \emph{final
sequence} (or \emph{terminal sequence}). In general the construction takes
steps indexed by ordinals that can go beyond $\omega$, although for semimodules
no more than $\omega$ steps are necessary as is shown below, the proof being
identical to that of \cite{bonchi2012coalgebraic}. In special cases, such as for finite
dimensional vector spaces, and the free semimodules $\Z(X)$ and $\N(X)$
for a finite set $X$, the construction terminates after a finite number of
steps and the required operations can be performed effectively. In this case the
procedure is the same as the forward algorithm of
\cite{boreale2009weighted,bonchi2012coalgebraic}. We identify sufficient
conditions for the termination of the final sequence construction. The construction
terminates for Artinian semimodules as one would expect. But we also show that
a weaker condition identified in \cite{droste2012weighted,konig2018generalized}
also suffices. Our presentation below has a lot in common with that of
\cite{bonchi2012coalgebraic}, but also has major differences due to the
less enriched setting of semimodules as compared to vector spaces.

\subsection{The Partition Refinement Procedure}
\label{subsec:lpr}
The \emph{final sequence} of the functor $\cL = K \times (\_)^A: \mathsf{SMod}
\ra \mathsf{SMod}$ is the countable cochain $\{W_i\}_{i \in \N}$ shown below
\[ W_0=1 \stackrel{!}{\la} W_1=\cL (1) \stackrel{\cL(!)}{\la} W_2= \cL^2 (1) 
\stackrel{\cL^2 (!)}{\la} \ldots .\] 
Here $1$ is the terminal object $\{0\}$ in the category $\mathsf{SMod}$ and $!$
denotes the unique arrow from $\cL(1)$ to $1$. $\cL^i$ is simply the $i$-fold
composition of $\cL$, with $\cL^0$ being the the identity functor.

Ad{\'a}mek and Koubek~\cite{adamek1995greatest} showed that in a category $\cC$
with limits of all ordinal indexed cochains if the final sequence of a functor $T:
\cC \ra \cC$
\[ 1 \stackrel{!}{\la} T(1) \stackrel{T(!)}{\la} T^2 (1) 
\stackrel{T^2 (!)}{\la} \ldots .\] 
stabilizes at ordinal $k$, in the sense that the arrow $T^k(!) : T^{k+1}(1) \ra
T^{k}(1)$ in the final sequence is an isomorphism, then $(T^k(1),(T^k(!))^{-1})$ is
a final $T$-coalgebra. Moreover, the existence of a final coalgebra is
sufficient to ensure stabilization of the final sequence. Below we show that
in $\mathsf{SMod}$ the final sequence for $\cL$ stabilizes at or before the index $\omega$
by essentially repeating the proof for vector spaces in \cite{bonchi2012coalgebraic}.

The objects and arrows in the cochain can be described as follows. Following the
steps in \cite{bonchi2012coalgebraic} we can show that for each $n \in \N$,
$\cL^n(1)$ is isomorphic to $K^{A^\ast_n}$, where $A^\ast_n$ is the set of all
words of length less than $n$. The proof is by induction on $n$. For
$n=1$, $\cL(1)= K \times 1^A \cong K \cong K^{\{\epsilon\}} = K^{A^\ast_1}$ by
definition. By the induction hypothesis, we have that an element $\langle
k,\sigma \rangle \in \cL^{n+1}(1) = K \times \cL^n(1)^A \cong K \times
(K^{A^\ast_n})^A \cong K^{1+A\times A^\ast_n}$ can be seen as a function that
maps $\epsilon$ to $k$ and $aw$ to $\sigma(a)(w)$ for $a \in A$ and $w \in
A^\ast_n$. Similarly, it can be shown that the arrow $\cL^n (!): \cL^{n+1}(1)
\ra \cL^n(1)$ maps a function $\sigma \in K^{A^\ast_{n+1}}$ to its restriction
$\sigma \restriction A^\ast_n$ to the subset of words of length less than $n$.
We denote $\sigma \restriction A^\ast_n$ by $\sigma \restriction n$ from now on.
Moreover, the limit of this $\mathsf{SMod}$-cochain is $K^{A^\ast}$ together
with the maps $\zeta_n : K^{A^\ast} \ra \cL^n(1)$ that send a weighted language
$\sigma: A^\ast \ra K$ to its restriction $\sigma \restriction n$ for each $n$.
The limit cone is shown in the commuting diagram below.

\begin{equation}
\begin{tikzcd}
& & & K^{A^\ast} \arrow[dlll, bend right=20, "\zeta_0"']
	\arrow[dll, bend right=15, "\zeta_1"']
	\arrow[dl, bend right=10, "\zeta_2"'] \\
1 \arrow[leftarrow,"!"]{r} & \cL (1) \arrow[leftarrow,"\cL(!)"]{r}
& \cL^2 (1) \arrow[leftarrow,"\cL^2 (!)"]{r} & \ldots
\end{tikzcd}
\end{equation}

On the other hand, any $K$-LWA $(V,\langle o,t \rangle)$, where $V$ is not
necessarily an Artinian semimodule, defines a cone $!_{n}: V \ra \cL^n(1)$
by induction as follows. The map $!_0: V \ra 1$ is the unique arrow to
the terminal object $1$, and $!_n = \cL(!_n) \circ \langle o,t \rangle$. More
concretely, for all $w \in A^\ast$,
\begin{equation}
\label{eqn:shriek}
!_{n+1}(v)(w) =
\begin{cases}
o(v),\myquad[5]\,\,\, &\mbox{if } w = \epsilon\\
!_n(t(v)(a)(w')) \myquad[2] &\mbox{if } w = aw'.
\end{cases}
\end{equation}

Equation~\ref{eqn:shriek} says $!_n(v)(w) = \sigma_l(v)(w)$ for all $w$ with
$|w| \leq n$, and $!_n(v)(w) = \sigma_l(v)(w|_n)$ for all $w$ with $|w| > n$,
where $w|_n$ is the prefix of $w$ of length $n$.

It is easily verified that the unique arrow from the cone from $V$ to the limit cone
from $K^{A^\ast}$ is the same as the unique $\cL$-coalgebra homomorphism $\llbracket
\_ \rrbracket^\cL_V$ from $(V,\langle o,t \rangle)$ to the final coalgebra
($K^{A^\ast},\epsilon,d)$. This is depicted in the following commuting diagram.

\begin{equation}
\begin{tikzcd}
& & & K^{A^\ast} \arrow[dlll, bend right=20, "\zeta_0"']
	\arrow[dll, bend right=15, "\zeta_1"']
	\arrow[dl, bend right=10, "\zeta_2"'] \\
1 \arrow[leftarrow,"!"]{r} & \cL (1) \arrow[leftarrow,"\cL(!)"]{r}
& \cL^2 (1) \arrow[leftarrow,"\cL^2 (!)"]{r} & \ldots \\
& & & V \arrow[ulll, bend left=20, "!_0"']
	\arrow[ull, bend left=15, "!_1"']
	\arrow[ul, bend left=10, "!_2"']
	\arrow[uu, bend right=40, "{\llbracket
\_ \rrbracket^\cL_V}"']
\end{tikzcd}
\end{equation}

Recall from Theorem~\ref{thm:lwb} and Corollary~\ref{cor:lwb} that the
$\cL$-behavioural equivalence relation $\approx_\cL$ on an $(V,\langle o,t
\rangle)$ is the kernel of $\llbracket \_ \rrbracket^\cL_V$. An abstract
procedure for computing the equivalence $\approx_\cL$ would be to iteratively
compute the kernel of the arrows $!_n$ and terminate (if ever) when
$\mathrm{ker}(!_{n+1})=\mathrm{ker}(!_n)$. This condition is equivalent to
$\cL^n(!): \cL^{n+1}(1) \ra \cL^n(1)$ being an isomorphism. But first we prove two
propositions that justify the abstract procedure. The second result requires $V$ to be
a finitely generated Artinian semimodule.

\begin{prop}
\label{prop:pr-forward}
Let $(V,\langle o,t \rangle)$ be a $K$-LWA. Consider the sequence of relations over
$V$ defined inductively by
\[ R_0 = \mathrm{ker}(o) \myquad[2] R_{i+1} = R_i \cap \bigcap_{a \in A} \{(u,v) \mid
(t(u)(a),t(v)(a)) \in R_i \}. \] 
Then for all $n$, $R_n$ is a linear relation satisfying $R_n =
\mathrm{ker}(!_{n+1})$.
\end{prop}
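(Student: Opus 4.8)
The plan is to prove the identity $R_n = \mathrm{ker}(!_{n+1})$ by induction on $n$ and to deduce the ``linear relation'' claim as a byproduct. First I would observe that every map $!_n$ is $K$-linear: the base map $!_0 : V \ra 1$ is the unique (hence $K$-linear) arrow into the terminal object, and $!_{n+1} = \cL(!_n) \circ \langle o,t\rangle$ is a composite of $\mathsf{SMod}$-morphisms, so $K$-linearity propagates. Since a $K$-linear relation is by definition the kernel of a $K$-linear map, once $R_n = \mathrm{ker}(!_{n+1})$ is established the assertion that $R_n$ is a $K$-linear relation is immediate.

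Next I would recast the kernels recursively. Writing $K_n := \mathrm{ker}(!_n)$, the defining recursion~\eqref{eqn:shriek} for $!_{n+1}$ (namely $!_{n+1}(v)(\epsilon) = o(v)$ and $!_{n+1}(v)(aw') = !_n(t(v)(a))(w')$) shows that $!_{n+1}(u) = !_{n+1}(v)$ holds exactly when $o(u)=o(v)$ and $!_n(t(u)(a)) = !_n(t(v)(a))$ for every $a$. Hence
\[ K_0 = V \times V, \qquad K_{n+1} = \mathrm{ker}(o) \cap \bigcap_{a \in A} \{(u,v) \mid (t(u)(a), t(v)(a)) \in K_n\}. \]
A short auxiliary induction, using monotonicity of the right-hand operator in $K_n$, then shows that the kernels form a descending chain $K_{n+1} \subseteq K_n$; in particular $K_{n+1} \subseteq K_1 = \mathrm{ker}(o)$. (Alternatively this monotonicity follows from cone compatibility $!_n = \cL^n(!) \circ !_{n+1}$.)

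With these preliminaries in place I would prove $R_n = K_{n+1}$ by induction. The base case is $R_0 = \mathrm{ker}(o) = K_1$. For the inductive step, unfolding the definition of $R_{n+1}$ and substituting the hypothesis $R_n = K_{n+1}$ gives $R_{n+1} = K_{n+1} \cap \bigcap_{a} \{(u,v) \mid (t(u)(a),t(v)(a)) \in K_{n+1}\}$, while the kernel recursion gives $K_{n+2} = \mathrm{ker}(o) \cap \bigcap_{a} \{(u,v) \mid (t(u)(a),t(v)(a)) \in K_{n+1}\}$. These two expressions differ only in their leading conjunct, $K_{n+1}$ versus $\mathrm{ker}(o)$. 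One inclusion $R_{n+1} \subseteq K_{n+2}$ is trivial since $K_{n+1} \subseteq \mathrm{ker}(o)$; for the reverse, given $(u,v)$ with $o(u)=o(v)$ and $(t(u)(a),t(v)(a)) \in K_{n+1}$ for all $a$, the descending-chain fact $K_{n+1} \subseteq K_n$ upgrades the transition condition to $(t(u)(a),t(v)(a)) \in K_n$, which together with $o(u)=o(v)$ is precisely the recursive criterion for $(u,v) \in K_{n+1}$. Thus $K_{n+2} \subseteq R_{n+1}$ and the two sets coincide, completing the induction.

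The main obstacle I anticipate is exactly this reconciliation: the definition of $R_{n+1}$ intersects against $R_n$, whereas the natural unfolding of $\mathrm{ker}(!_{n+2})$ intersects against $\mathrm{ker}(o)$. Everything hinges on monotonicity of the kernel chain, so I would isolate $K_{n+1} \subseteq K_n$ as a clean preliminary fact before running the main induction.
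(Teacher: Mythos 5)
Your proof is correct and follows essentially the same route as the paper's: an induction on $n$ that unfolds the recursion for $!_{n+1}$ via the kernel identities $\mathrm{ker}(g\circ f)=\{(u,v)\mid (f(u),f(v))\in\mathrm{ker}(g)\}$ and $\mathrm{ker}(\langle f_1,f_2\rangle)=\mathrm{ker}(f_1)\cap\mathrm{ker}(f_2)$. The only difference is that you explicitly isolate the monotonicity $\mathrm{ker}(!_{n+1})\subseteq\mathrm{ker}(!_n)$ needed to reconcile the leading conjunct $R_n$ with the $\mathrm{ker}(o)$ appearing in the unfolding of $\mathrm{ker}(!_{n+2})$ --- a point the paper's chain of equalities passes over silently --- which makes your version slightly more complete.
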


\begin{proof}
We show that $R_n = \mathrm{ker}(!_{n+1})$ for each $n$ by induction. It
follows that $R_n$ is a linear relation. For the basis, since $!_1 = o$ we have
$R_0 = \mathrm{ker}(o) = \mathrm{ker}(!_1)$. Suppose $R_n=
\mathrm{ker}(!_{n+1})$ for $n \geq 1$. Then
\begin{align*}
R_{n+1} &= R_n \cap  \bigcap_{a \in A} \{(u,v) \mid (t(u)(a),t(v)(a)) \in R_n \} \\
&= \mathrm{ker}(!_{n+1}) \cap  \bigcap_{a \in A} \{(u,v) \mid (t(u)(a),t(v)(a)) \in {ker}(!_{n+1}) \} \\
&= \mathrm{ker}(!_{n+1}) \cap  \bigcap_{a \in A} \mathrm{ker}(!_{n+1} \circ \langle t(\_)(a), t(\_)(a)
\rangle) \\
&= \mathrm{ker}(\langle !_{n+2}) \mbox{ from Equation~\ref{eqn:shriek}  }
\end{align*}
Here we use the following facts that can be easily checked.
\be
\item[(1)] $ \mathrm{ker}(g \circ f) = \{(u,v) \mid (f(u),f(v)) \in
\mathrm{ker}(g)\}$ and
\item[(2)] $\mathrm{ker}(\langle f_1,f_2 \rangle) = \mathrm{ker}(f_1) \cap
\mathrm{ker}(f_2)$. \qedhere
\ee
\end{proof}

It follows from Proposition~\ref{prop:pr-forward} that when $V$ is a finite
dimensional vector space then for some $n \in \N$,
$\mathrm{ker}(!_{n+1})=\mathrm{ker}(!_n)$ \ie, the procedure terminates after a
finite number of steps. This is because for vector spaces and linear maps,
kernels (in the sense of \cite{bonchi2012coalgebraic}) correspond to subspaces
and for a finite dimensional space there can only be a finite chain of subspace
containment; see \cite{bonchi2012coalgebraic} for the details. For the more
general case of finitely generated Artinian semimodules over a semiring, we can
also guarantee stabilization of the cochain. This is shown in the following
proposition.

\begin{prop}
\label{prop:stabilization}
Let $(V,\langle o,t \rangle)$ be a $K$-LWA where $V$ is a finitely generated Artinian
semimodule. Consider the sequence of relations over $V$ defined inductively by
\[ R_0 = \mathrm{ker}(o) \myquad[2] R_{i+1} = R_i \cap \bigcap_{a \in A} \{(u,v) \mid
(t(u)(a),t(v)(a)) \in R_i \}. \] 
Then there is a $j$ such that $R_j = R_{j+1}$. The largest $K$-linear
bisimulation $\approx_\cL$ on $V$ is then identical to $R_j$.
\end{prop}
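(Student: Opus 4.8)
The plan is to read the sequence $(R_n)$ as a descending chain of subsemimodules of $V \times V$, extract a stabilization index from the Artinian hypothesis, propagate stabilization forward, and finally identify the limit with $\approx_\cL$. First I would invoke Proposition~\ref{prop:pr-forward}: for every $n$ we have $R_n = \mathrm{ker}(!_{n+1})$, so each $R_n$ is a $K$-linear relation and hence, as noted in Section~\ref{subsec:semimod}, a subsemimodule of $V \times V$ (in fact a congruence, being a kernel). The defining recurrence makes the chain descending, $R_0 \supseteq R_1 \supseteq R_2 \supseteq \cdots$. Writing $\Phi$ for the refinement operator $\Phi(S) = S \cap \bigcap_{a \in A}\{(u,v) \mid (t(u)(a),t(v)(a)) \in S\}$, we have $R_{i+1} = \Phi(R_i)$, so the whole proposition reduces to producing a single index $j$ with $R_j = R_{j+1}$.

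The heart of the argument, and the step I expect to be the main obstacle, is deducing this stabilization from the Artinian property. The naive route is to say that $V \times V$ inherits the descending chain condition from $V$ and that $(R_n)$ therefore stabilizes, but I would be wary of this: unlike for modules over a ring, the descending chain condition does \emph{not} pass to finite products of semimodules in general. For instance $\R_+$ is Artinian as a semimodule over itself (its only subsemimodules are $\{0\}$ and $\R_+$), yet $\R_+ \times \R_+$ carries infinite strictly descending chains of subsemimodules, obtained by steadily shrinking the aperture of a planar cone. What must rescue the situation is that the $R_n$ are not arbitrary subsemimodules but \emph{congruences}: each contains the diagonal and is closed under transitivity, and transitivity is a severe constraint that excludes the bad chains just described. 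So the real work is to show that the Artinian hypothesis on $V$ forces the descending chain condition on the congruences of $V$; equivalently, that the chain of quotients $V/R_n$, all finitely generated since $V$ is, must stabilize. I would try to exploit the surjections $V/R_{n+1} \twoheadrightarrow V/R_n$ together with finite generation, reducing stabilization of the congruences to stabilization of an associated descending chain of genuine subsemimodules of $V$, to which the Artinian condition applies directly.

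Granting $R_j = R_{j+1}$, propagation is immediate: since $R_{i+1} = \Phi(R_i)$ and $\Phi$ is a function of its argument, the equality $R_j = R_{j+1}$ gives $R_{j+1} = \Phi(R_j) = \Phi(R_{j+1}) = R_{j+2}$, and by induction $R_n = R_j$ for all $n \ge j$. Consequently $R_j = \bigcap_{n} R_n$, since the chain is descending and constant from index $j$ onwards.

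Finally I would identify this intersection with $\approx_\cL$. By the remark following Equation~\ref{eqn:shriek}, $!_{n+1}(v)$ agrees with $\sigma_l(v)$ on all words of length at most $n+1$, and on longer words only through their length-$(n+1)$ prefixes $w|_{n+1}$; hence $R_n = \mathrm{ker}(!_{n+1})$ identifies exactly those pairs whose recognized languages coincide up to length $n+1$. Intersecting over all $n$ yields $\bigcap_n R_n = \{(u,v) \mid \sigma_l(u) = \sigma_l(v)\} = \mathrm{ker}(\llbracket \_ \rrbracket^\cL_V)$, which by Theorem~\ref{thm:lwb} and Corollary~\ref{cor:lwb} is the largest $K$-linear bisimulation $\approx_\cL$. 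Therefore $R_j = \approx_\cL$, completing both parts of the proposition.
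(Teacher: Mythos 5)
Your architecture is right, and two of its three parts are unproblematic. The propagation step ($R_j = R_{j+1}$ implies $R_n = R_j$ for all $n \ge j$, since $R_{i+1}$ is a function of $R_i$) is correct, and your concluding identification $R_j = \bigcap_n R_n = \bigcap_n \mathrm{ker}(!_{n+1}) = \mathrm{ker}(\llbracket \_ \rrbracket^\cL_V) = \;\approx_\cL$ is a valid alternative to the paper's route: the paper instead shows via Lemma~\ref{lem:kbisim} that $R_j$ is itself a $K$-linear bisimulation and that every $K$-linear bisimulation is contained in every $R_i$ by induction, which needs neither the constancy of the tail nor the explicit description of $\bigcap_n R_n$. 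Either ending works.

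The genuine gap is exactly where you flagged it: you never prove that the descending chain stabilizes, and the proposition stands or falls on that step. What you offer is a plan --- pass to the surjections $V/R_{n+1} \twoheadrightarrow V/R_n$ and reduce to a descending chain of honest subsemimodules of $V$ --- but for semimodules a congruence is \emph{not} determined by the class of $0$ (that correspondence is special to modules over rings), so there is no evident way to trade the chain of congruences for a chain of subsemimodules of $V$, and the sketch does not close. That said, your diagnosis of \emph{why} the naive route fails is both correct and important: the paper's own proof disposes of stabilization by asserting that $V$ Artinian implies $V \times V$ Artinian ``as the projection of a subsemimodule of $V\times V$ is a subsemimodule of $V$,'' and your example refutes this twice over. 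The cones $C_n = \{(x,y) \in \R_+ \times \R_+ \mid y \ge nx\}$ form a strictly descending chain of subsemimodules of $\R_+ \times \R_+$ even though $\R_+$ (finitely generated, with only $\{0\}$ and $\R_+$ as subsemimodules) is Artinian; moreover all the $C_n$ have the same first projection, so the projection argument is a non sequitur in any case. You are also right that these particular chains are not congruences (for $n\ge 2$ they miss the diagonal), so the proposition may yet be salvageable by arguing about congruences only --- but neither you nor the paper supplies that argument, so as written your proof, like the paper's, is incomplete at its pivotal step.
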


\begin{proof}
We have shown that $R_n = \mathrm{ker}(!_{n+1})$ for each $n$. Since each $R_i$
is linear it is a subsemimodule of the product $V \times V$. Since $V$ is
Artinian so is $V\times V$ as the projection of a subsemimodule of $V\times V$ is
a subsemimodule of $V$. Since $R_i \supseteq R_{i+1}$, and we cannot have an infinite
descending chain of subsemimodule inclusions there exists a $j$ such that $R_j =
R_{j+1}$.

We now show that $R_j$ is a $K$-linear bisimulation by applying
Lemma~\ref{lem:kbisim}. Since $R_j \subseteq R_0=\mathrm{ker}(o)$, condition (1)
of the lemma is satisfied. Since $R_j=R_{j+1}$, we have $R_j = R_j \cap
\bigcap_{a \in A} \{(u,v) \mid (t(u)(a),t(v)(a)) \in R_j \}$, \ie, $uR_jv$
implies $t_auR_jt_av$, for each $a \in A$, which means condition (2) is also
satisfied.

Finally, we must show that any $K$-linear bisimulation $R$ is included in $R_j$. We
do this by showing $R \subseteq R_i$ for all $i$ by induction. By
Lemma~\ref{lem:kbisim}, $R \subseteq \mathrm{ker}(o) = R_0$. Now assume $R
\subseteq R_i$. By Lemma~\ref{lem:kbisim} again, $uRv$ implies $t_auRt_av$ for
all $a \in A$ and hence, $t_auR_it_av$ for all $a \in A$. By definition of
$R_{i+1}$ it follows that $uR_{i+1}v$ and hence $R \subseteq R_{i+1}$.
\end{proof}

We now show that the termination condition in
Proposition~\ref{prop:stabilization}, namely $V$ is Artinian, can be weakened.
This weaker condition is mentioned in
\cite{droste2012weighted,konig2018generalized} but in a somewhat different
setting. To state the condition, let $V_n \subseteq V = K(X)$ for a finite set
$X$ be defined by $V_n = \{\sigma_l(\_)(w): X \ra K \mid w \in A^\ast_n\}$ and
let $V_\ast = \bigcup_{n=0}^\infty V_n$. Then the following proposition states
that the procedure described in Proposition~\ref{prop:stabilization} terminates
if $\mathrm{span}(V_\ast)$ is a finitely generated semimodule.

\begin{prop}
\label{prop:weak-stab}
Let $(V,\langle o,t \rangle)$ be a $K$-LWA where $V=K(X)$ for a finite set $X$,
such that $\mathrm{span}(V_\ast)$ is finitely generated. Consider the sequence
of relations over $V$ defined inductively by
\[ R_0 = \mathrm{ker}(o) \myquad[2] R_{i+1} = R_i \cap \bigcap_{a \in A} \{(u,v) \mid
(t(u)(a),t(v)(a)) \in R_i \}. \] 
Then there is a $j$ such that $R_j = R_{j+1}$. The largest $K$-linear
bisimulation $\approx_\cL$ on $V$ is then identical to $R_j$.
\end{prop}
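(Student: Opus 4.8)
The plan is to trade the descending chain $R_0 \supseteq R_1 \supseteq \cdots$, which need not stabilize on its own since $V \times V$ is not assumed Artinian, for an \emph{ascending} chain of subsemimodules that is forced to stabilize by the finite-generation hypothesis. The link between the two chains is a duality between the word-indexed maps collected in $V_\ast$ and the linear functionals whose kernels carve out the $R_n$. For each $w \in A^\ast$ let $f_w : V \to K$ be the map $f_w(v) = \sigma_l(v)(w)$; it is $K$-linear because $\sigma_l = \llbracket \_ \rrbracket^\cL_V$ is $K$-linear. By Proposition~\ref{prop:pr-forward}, together with the description of $!_{n+1}(v)$ as $\sigma_l(v)$ truncated to words of length at most $n$, we have $R_n = \mathrm{ker}(!_{n+1}) = \bigcap_{|w| \le n} \mathrm{ker}(f_w)$. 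Hence $R_n = R_{n+1}$ holds precisely when $R_n \subseteq \mathrm{ker}(f_w)$ for every $w$ with $|w| = n+1$.

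The key observation is that $f_w$ is determined by its values on the basis, namely by $g_w := \sigma_l(\_)(w) \in K^X$ (so that $V_n = \{ g_w \mid w \in A^\ast_n\}$): since $V = K(X)$ is free on $X$, the assignment sending $g \in K^X$ to the unique $K$-linear extension of $x \mapsto g(x)$ is an isomorphism $K^X \cong \mathrm{Hom}_K(V,K)$ carrying each $g_w$ to $f_w$. Consequently a linear dependence $g_w = \sum_i c_i g_{w_i}$ in $K^X$ (with $c_i \in K$) transfers verbatim to $f_w = \sum_i c_i f_{w_i}$, and from the latter I would conclude $\bigcap_i \mathrm{ker}(f_{w_i}) \subseteq \mathrm{ker}(f_w)$ — the one computation where it matters that everything goes through \emph{without} subtraction, and it does, since both implications use only addition and scalar multiplication. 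Unwinding this, if every $g_w$ with $|w| = n+1$ lies in $\mathrm{span}(V_{n+1}) = \mathrm{span}\{g_{w'} \mid |w'| \le n\}$ — equivalently, if $\mathrm{span}(V_{n+1}) = \mathrm{span}(V_{n+2})$ — then $R_n = R_{n+1}$.

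Finite generation now enters through the ascending chain $\mathrm{span}(V_0) \subseteq \mathrm{span}(V_1) \subseteq \cdots$ of subsemimodules of $K^X$ (ascending because $A^\ast_n \subseteq A^\ast_{n+1}$ gives $V_n \subseteq V_{n+1}$), whose union is $\mathrm{span}(V_\ast)$. When $\mathrm{span}(V_\ast)$ is finitely generated, each of its finitely many generators already lies in some $\mathrm{span}(V_m)$, so all of them lie in $\mathrm{span}(V_M)$ for $M$ the largest such index; hence $\mathrm{span}(V_M) = \mathrm{span}(V_\ast)$ and the chain is constant from $M$ onward. For $j = M-1$ this yields $\mathrm{span}(V_{j+1}) = \mathrm{span}(V_{j+2})$ and therefore $R_j = R_{j+1}$. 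Note that this is genuinely weaker than the Artinian hypothesis of Proposition~\ref{prop:stabilization}: here we exploit stabilization of an \emph{ascending} chain under finite generation rather than a descending chain condition.

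Finally, with $R_j = R_{j+1}$ established, the identification $R_j = \approx_\cL$ proceeds exactly as in the last two paragraphs of the proof of Proposition~\ref{prop:stabilization}: each $R_n$ is $K$-linear by Proposition~\ref{prop:pr-forward}; the equality $R_j = R_{j+1}$ is the $t_a$-invariance required by clause (2) of Lemma~\ref{lem:kbisim}, while $R_j \subseteq R_0 = \mathrm{ker}(o)$ supplies clause (1), so $R_j$ is a $K$-linear bisimulation; and a routine induction via Lemma~\ref{lem:kbisim} shows that every $K$-linear bisimulation $R$ satisfies $R \subseteq R_i$ for all $i$, whence $R \subseteq R_j$. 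Thus $R_j$ is the largest $K$-linear bisimulation, which by Corollary~\ref{cor:lwb} is $\approx_\cL$. I expect the main obstacle to be the duality step of the second paragraph — setting up $K^X \cong \mathrm{Hom}_K(V,K)$ correctly and checking that both the transfer of a linear dependence and the resulting kernel containment survive the absence of additive inverses.
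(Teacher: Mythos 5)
Your proposal is correct and follows essentially the same route as the paper's proof: stabilize the ascending chain $\mathrm{span}(V_0)\subseteq\mathrm{span}(V_1)\subseteq\cdots$ using finite generation of $\mathrm{span}(V_\ast)$, transfer the resulting linear dependence of $\sigma_l(\_)(w')$ on shorter-word functionals to a kernel containment (a computation that indeed needs no subtraction), conclude $R_j=R_{j+1}$ via Proposition~\ref{prop:pr-forward}, and finish as in Proposition~\ref{prop:stabilization}. Your explicit handling of the duality $K^X\cong\mathrm{Hom}_K(V,K)$ only makes precise a step the paper's proof leaves implicit.
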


\begin{proof}
Suppose $\mathrm{span}(V_\ast)$ is finitely generated. Since $V_n \subseteq
V_{n+1}$ for all $n$, and $V_\ast = \bigcup_{n=0}^\infty V_n$, we have, for some
$j$, $\mathrm{span}(V_\ast) = \mathrm{span}(V_j) = \mathrm{span}(V_{j+1})$.
Below we show that $\mathrm{ker}(!_{j}) = \mathrm{ker}(!_{j+1})$. It follows
from Proposition~\ref{prop:pr-forward} that $R_j = R_{j+1}$ The rest of the
proof is identical to the one for Proposition~\ref{prop:stabilization}.

It is immediate from Proposition~\ref{prop:pr-forward} that
$\mathrm{ker}(!_{j+1}) \subseteq \mathrm{ker}(!_{j})$, as $R_{j+1} \subseteq
R_j$. To prove the containment in the other direction, suppose $(u,v) \in
\mathrm{ker}(!_j)$. Then $\sigma_l(u)(w)= \sigma_l(v)(w)$ for all $w \in
A^\ast_j$ from Equation~\ref{eqn:shriek}. Now, since $\mathrm{span}(V_j) =
\mathrm{span}(V_{j+1})$, any element of $V_{j+1}$ is a linear combination
of elements of $V_j$, \ie, for any $w' \in A^\ast_{j+1}$ we have,
$\sigma_l(\_)(w') = \sum_{i=1}^n \sigma_l(\_)(w_i)$ for some $w_i \in A^\ast_j$,
$1 \leq i \leq n$. Therefore,
\begin{align*}
\sigma_l(u)(w') &= \sum_{i=1}^n \sigma_l(u)(w_i) \text{ for some } w_i \in A^\ast_j,
1 \leq i \leq n\\
&= \sum_{i=1}^n \sigma_l(v)(w_i) \text{ since  } \sigma_l(u)(w)=
\sigma_l(v)(w) \text{ for all } w \in A^\ast_j\\
&= \sigma_l(v)(w'),
\end{align*}
\ie, $(u,v) \in \mathrm{ker}(!_{j+1})$.
\end{proof}

However, even when the final sequence stabilizes after a finite number of steps, in general
there is no effective procedure to compute the kernel of a $K$-linear map for a semiring
$K$. This is because unlike in $\R$, $\Z$ and $\N$, for semirings in general we do not have a procedure for linear equation solving; the problem is undecidable for certain semirings~\cite{narendran1996solving}. We conclude this section by presenting
a couple of examples from different semimodules to illustrate how our procedure works.

\begin{figure}[t]
\begin{center}
\caption{A $K$-linear weighted automaton over the ring $\Z$}
\vspace{0.5cm}
\label{fig:lwa-1}
\begin{adjustbox}{width=3cm}
\begin{tikzpicture}
\node[state] (q0) {$x_0$}; 
\node[state,below= of q0, label=right:$1$] (q1) {$x_1$}; 
\draw (q0) edge[left] node{$a,1$} (q1);
\draw (q1) edge [loop below] node{$a,2$} (q1);
\node[state,right= of q0] (q2) {$x_2$}; 
\node[state,below = of q2, label=right:$1$] (q3) {$x_3$}; 
\node[state,below= of q3, label=right:$1$] (q4) {$x_4$}; 
\draw (q2) edge[right] node{$a,1$} (q3);
\draw (q3) edge[right] node{$a,1$} (q4);
\draw (q4) edge [loop below] node{$a,2$} (q4);
\end{tikzpicture}
\end{adjustbox} 
\end{center}
\end{figure}
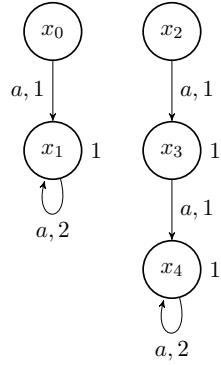

\begin{exa}
\label{ex:lwa-1}
Figure~\ref{fig:lwa-1} shows a $K$-LWA $(V,\langle o,t \rangle)$, adapted from
\cite{Sakarovitch2016slides}, over the alphabet $A=\{a\}$. The semiring $K$ is
the ring $\Z$ of integers. Here $V = \Z(X)$ where $X=\{x_0,x_1,x_2,x_3,x_4\}$.
In the example, the maps $o$ and $t$ are generated from their values shown
against the nodes and edges, respectively, by linear extension. Missing values
against nodes are assumed to be zero. Note that $V$ is \emph{not} Artinian but
$\mathrm{span}(V_\ast)$ is indeed finitely generated. To see this, $V_3$
consists of the set containing the following maps from $X$ to $\Z$
\begin{align*}
\sigma_l(\_)(\epsilon) = \{x_0 \mapsto 0, x_1 \mapsto 1, x_2 \mapsto 0, x_3 \mapsto 1, x_4 \mapsto 1\} \\
\sigma_l(\_)(a) = \{x_0 \mapsto 1, x_1 \mapsto 2, x_2 \mapsto 1, x_3 \mapsto 1, x_4 \mapsto 2\} \\
\sigma_l(\_)(aa) = \{x_0 \mapsto 2, x_1 \mapsto 4, x_2 \mapsto 1, x_3 \mapsto 2, x_4 \mapsto 4\}.
\end{align*}
Now, it is easily seen that for all $x \in X$ and $n \geq 2$, $\sigma_l(x)(a^n) =
2^{n-2}\sigma_l(x)(aa)$. Therefore, $\mathrm{span}(V_\ast)$ is generated by
$V_3$. To apply the partition refinement procedure to this $K$-LWA, we use
Propositions~\ref{prop:pr-forward} and \ref{prop:weak-stab}. We have
\begin{align*}
R_0 = \mathrm{ker}(o) = \{(u,v) \mid & u = \sum_{i=0}^4 c_i x_i,
v = \sum_{i=0}^4 d_i x_i, \text{ where } c_i,d_i \in \Z \text{ for } i \in [0,4]\\
& \text{ and }\\
&c_1+c_3+c_4 = d_1 + d_3 + d_4 \},
\end{align*}
as $x_1$ , $x_3$ and $x_4$ are the only states with nonzero weights and all
of them have weight one. In the next iteration, we have
\begin{align*}
R_1 = R_0 \cap \{(u,v) \mid \: & u = \sum_{i=0}^4 c_i x_i,
v = \sum_{i=0}^4 d_i x_i, \text{ where } c_i,d_i \in \Z \text{ for } i \in [1,4]\\
& \text{ and } \\
& c_0 + 2c_1 + c_2 + c_3 + 2c_4 = d_0 + 2d_1 + d_2 + d_3 + 2d_4\},
\end{align*}
as all the edges have weight one except the loops on $x_1$ and $x_4$, which have
weight two. It is clear that $R_2= R_1$, since we take the intersection with the
same set in obtaining $R_1$ from $R_0$ as in obtaining $R_2$ from $R_1$.
Therefore, the $K$-linear bisimilarity relation, \ie, weighted language
equivalence, is given by the set of all pairs $(u,v)$ with $u = \sum_{i=0}^4 c_i
x_i$, $v = \sum_{i=0}^4 d_i x_i$ satisfying the integer equations
\begin{align*}
&c_1+c_3+c_4 = d_1 + d_3 + d_4 \\
&c_0 + 2c_1 + c_2 + c_3 + 2c_4 = d_0 + 2d_1 + d_2 + d_3 + 2d_4.
\end{align*}
\end{exa}

When $V = \Z(X)$ for a finite set $X$, it can be shown that the number of
iterations of the loop computing $R_{i+1}$ from $R_i$ is bounded by $|X|$ and
termination then follows from the decidability of linear integer arithmetic. A
more efficient algorithm for deciding language equivalence between states in $X$
(and not the entire $V(X)$ as in our case) appears in
\cite{beal2005equivalence}. In fact, this problem is decidable for any semiring
that is a subsemiring of a field~\cite{sakarovitch2009rational}.

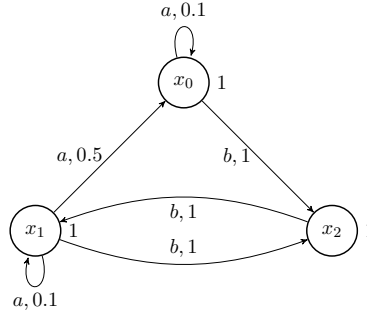
\begin{figure}[t]
\begin{center}
\caption{A $K$-linear weighted automaton over the semiring $K=([0,1],\max,.,0,1)$}
\vspace{0.5cm}
\label{fig:lwa-1}
\begin{adjustbox}{width=5cm}
\begin{tikzpicture}[node distance=3cm]
\node[state, label=right:$1$] (q0) {$x_0$}; 
\node[state,below left = of q0, label=right:$1$] (q1) {$x_1$}; 
\node[state,below right = of q0, label=right:$1$] (q2) {$x_2$}; 
\draw (q0) edge[left] node{$b,1$} (q2);
\draw (q0) edge [loop above] node{$a,0.1$} (q0);
\draw (q1) edge[left] node{$a,0.5$} (q0);
\draw (q1) edge[bend right=20, above] node{$b,1$} (q2);
\draw (q2) edge[bend right=20, below] node{$b,1$} (q1);
\draw (q1) edge [loop below] node{$a,0.1$} (q1);
\end{tikzpicture}
\end{adjustbox} 
\end{center}
\end{figure}

\begin{exa}
\label{ex:lwa-2}
Figure~\ref{fig:lwa-1} shows a $K$-LWA $(V,\langle o,t \rangle)$, adapted from
\cite{konig2018generalized}, over the alphabet $A=\{a,b\}$ and semiring
$K=([0,1],\max,.,0,1)$. Here $V = K(X)$ where $X=\{x_0,x_1,x_2\}$. It can be
checked that $V$ is not an Artinian semimodule but $V_\ast$ is finitely
generated. As in the example above, we use Propositions~\ref{prop:pr-forward}
and \ref{prop:weak-stab} to apply the partition refinement procedure. We have
\begin{align*}
R_0 = \mathrm{ker}(o) = \{(u,v) \mid \: & u = \sum_{i=0}^2 c_i x_i,
v = \sum_{i=0}^2 d_i x_i, \text{ where } c_i,d_i \in [0,1] \text{ for } i \in \{0,1,2\}\\
& \text{ and }\\
& \max\{c_0,c_1,c_2\} = \max\{d_0,d_1,d_2 \} \}.
\end{align*}
In the next iteration we have,
\begin{align*}
R_1 = R_0 \cap \{(u,v) \mid \: & u = \sum_{i=0}^2 c_i x_i,
v = \sum_{i=0}^2 d_i x_i, \text{ where } c_i,d_i \in [0,1] \text{ for } i \in \{0,1,2\}\\
& \text{ and } (t(u)(a),t(v)(a)) \in R_0 \\
&  \text{ and } (t(u)(b),t(v)(b)) \in R_0\}.
\end{align*}

In the above expression, $t(u)(a) = \max \{0.1c_0,0.5c_1,0.1c_1\}$, $t(v)(a) =
\max \{0.1d_0,0.5d_1,0.1d_1\}$, $t(u)(b) = \max \{c_1,c_2\}$ and $t(v)(b) =
\max \{d_1,d_2\}$. Again, we can check that $R_2=R_1$. The resulting linear
equations over $K$ can be solved by using the theory of $l$-monoids and residuation~\cite{konig2018generalized}.
\end{exa}

\subsection{Related Partition Refinement Algorithms}
\label{subsec:konig}
For specific semirings, algorithms have been
proposed in the case of probabilities~\cite{kiefer2011language},
fields~\cite{boreale2009weighted}, rings~\cite{droste2012weighted}, division
rings~\cite{flouret1997noncommutative} and principal ideal
domains~\cite{beal2005equivalence}. 

The approach to partition refinement which has some similarity with ours is by
K{\"o}nig and K{\"u}pper~\cite{konig2018generalized}. This section is a brief
summary of their work followed by a comparison with the current paper. The paper
generalizes the partition refinement algorithm to a coalgebraic setting, just as
this paper. It proposes a generic procedure for language equivalence for
transition systems, of which weighted automata over arbitrary semirings form an
important special case. However, unlike the usual partition refinement approach,
\cite{konig2018generalized} does not provide a unique or canonical
representative for the weighted language accepted, in general. The procedure
based on the approach is not guaranteed to terminate in all cases, but does so
for particular semirings, just as in our case. It is based on a notion of
equivalence classes of arrows and involves solving linear equations for a given
semiring.

In more detail, \cite{konig2018generalized} uses $\cM(K)$, the category of
finite sets and matrices over the semiring $K$
with matrix multiplication as composition, as the base category. This is just the
Kleisli category of the free semimodule monad and is equivalent to the category
of free semimodules over finite sets that we use. The endofunctor $F$ over
$\cM(K)$ defining a weighted automaton is given by $FX = 1 + A \times X$.
This can be seen as being equivalent to our $\cL$, as the paper
\cite{konig2018generalized} uses matrices as arrows rather than maps.

The paper presents two generic procedures for partition refinement, where the
second is just an optimized version of the first. The basic ingredient in the
theory which is new is a preorder on objects and arrows of a concrete category
$\cC$. For objects $X$ and $Y$ in $\cC$, this is defined by $X \leq Y$ if there
is an arrow $f : X \ra Y$. The relation $X \equiv Y$ holds when $X \leq Y$ and
$Y \leq X$. For arrows $f: X \ra Y$ and $g: X \ra Z$ with the same domain, $f
\leq^X g$ if there exists an arrow $h: Y \ra Z$ such that $g = h \circ f$.
Similarly, $f \equiv^X g$ if $f \leq^X g$ and $g \leq^X f$. It is easy to check
that $\leq$ and $\leq^X$ are preorders and $\equiv$ and $\equiv^X$ are
equivalence relations.

Given a coalgebra $\alpha: X \ra FX$ in $\cC$ and an arrow $f: X \ra Y$, $f$ is
called a \emph{postfixed point} if $f \leq^X Ff \circ \alpha$. This is
equivalent to the existence of a coalgebra $\beta: Y \ra FY$ with $f$ an
$F$-homomorphism by simply taking $\beta$ as the mediating morphism that
witnesses $f \leq^X Ff \circ \alpha$. The paper then shows
that the largest postfixed point $f: X \ra Y$ induces behavioural equivalence:
$x \approx_F y$ if and only if $f(x)=f(y)$.

The first partition refinement procedure in \cite{konig2018generalized}, called
Procedure A, uses the final sequence construction to find the largest postfixed
point for general coalgebras. The second, called Procedure B, is an optimized
version intended for weighted automata over a semiring. Both procedures take a
coalgebra $\alpha: X \ra FX$ as input, and if they terminate, return a coalgebra
$\beta: Y \ra FY$ along with an $F$-homomorphism $f: X \ra Y$ from $\alpha$ to
$\beta$ that induces behavioural equivalence on $X$: $x \approx_\cF y$ if and
only if $f(x)=f(y)$. Here $x$ and $y$ are elements of the underlying set of $X$.
Note that $\beta$ need not be the final coalgebra in general.

Procedure $A$ constructs the arrows $!_n: X \ra F^n(1)$ in the diagram below
\begin{equation}
\begin{tikzcd}
1 \arrow[leftarrow,"!"]{r} & F (1) \arrow[leftarrow,"F(!)"]{r}
& F^2 (1) \arrow[leftarrow,"F^2 (!)"]{r} & \ldots \\
& & & X \arrow[ulll, bend left=20, "!_0"']
	\arrow[ull, bend left=15, "!_1"']
	\arrow[ul, bend left=10, "!_2"']
\end{tikzcd}
\end{equation}
by induction as follows: $!_0: X \ra 1$ is the unique arrow to the terminal
object and $!_{n+1} = F(!_n) \circ \alpha : X \ra F^{n+1}(1)$. What is novel
here is the termination condition for Procedure A, namely $!_n \leq^X !_{n+1}$,
\ie, there is an arrow $\beta: F^n(1) \ra F^{n+1}(1)$ such that $\beta \circ !_n
= !_{n+1}$. This condition is equivalent to $!_n \equiv^X !_{n+1}$, which is
weaker than requiring that $F^n(!)$ is an isomorphism, \ie, the
elements of the cochain stabilize at $n$, as is the case for our procedure.
This can lead to earlier termination for Procedure A. When applied to weighted
automata over a semiring $K$, the procedure yields for each $n$, $!_{n+1}(x)(w)
= \sigma_l(x)(w)$ for $w \in A^\ast_n$, just as in our case; see
Equation~\ref{eqn:shriek}. The termination condition then reduces to checking
whether the semimodule $\mathrm{span}(\{\sigma_l(\_)(w) \mid w \in
A^\ast_{n+1}\})$ equals $\mathrm{span}(\{\sigma_l(\_)(w) \mid w \in
A^\ast_n\})$, the condition in Proposition~\ref{prop:weak-stab}, which can be
done by solving linear equations in the semiring $K$.

Procedure B replaces $!_n$ for each $n$ by any $e_n$ such that $!_n \equiv^X
e_n$, thus potentially reducing the search space. The termination depends on the
semiring. In particular, termination is guaranteed when the semimodule $K(X)$ is
Artinian, as well as when the weaker condition mentioned in
Proposition~\ref{prop:weak-stab} holds. To summarize, although the termination
condition for the procedures in \cite{konig2018generalized} is different from ours and it
can lead to more efficiency in principle, the underlying framework of the final
chain is identical to ours. The question whether there are examples where Procedure A
(or B) terminates and our procedure does not is still open.

\section{Conclusion}

In this paper we have continued the programme in Bonchi
\etal~\cite{bonchi2012coalgebraic} of a coalgebraic theory of bisimulation of
weighted automata. We have generalized the formal framework to one based on
weighted automata on semirings rather than fields. As we have shown, almost all
the results of \cite{bonchi2012coalgebraic} continue to hold except for the
partition refinement algorithm. The latter necessarily depends on the specific
type of semiring involved. Hence we can only propose an abstract procedure for
partition refinement that may not, in general, halt or lead to a solution even
if it halts, because of the the lack of enough operations. We have provided
sufficient conditions for the procedure to halt. We have also compared our work
with that in \cite{konig2018generalized}, which provides another partition
refinement algorithm for weighted automata over semirings in the coalgebraic
framework.

\section*{Acknowledgment}
The author thanks David Benson for introducing him to category theory
35 years ago. He also wishes to acknowledge the anonymous referees
for their help in substantially refining the contents of Section~\ref{sec:lpr} of
the paper on partition refinement.

\bibliographystyle{alphaurl}
\bibliography{bisim.bib}    
\end{document}